\begin{document}
\title{Decisiveness for countable MDPs\texorpdfstring{\\}{ }and insights for NPLCSs and POMDPs}
\titlerunning{Taming countable Markov decision processes with decisiveness}

\author{Nathalie Bertrand\inst{1}
  \and Patricia Bouyer\inst{2}
  \and Thomas Brihaye\inst{3}
  \and\texorpdfstring{\\}{ }Paulin Fournier
  \and Pierre Vandenhove\inst{3}
}
\authorrunning{N. Bertrand, P. Bouyer, T. Brihaye, P. Fournier, and P. Vandenhove}
\institute{IRISA -- Inria, CNRS, Univ. Rennes, Rennes, France \and
Universit\'e Paris-Saclay, CNRS, ENS Paris-Saclay, Laboratoire M\'ethodes Formelles, 91190, Gif-sur-Yvette, France \and
UMONS -- Universit\'e de Mons, Mons, Belgium}
\maketitle
\begin{abstract}
Markov chains and Markov decision processes (MDPs) are well-established probabilistic models. While finite Markov models are well-understood, analysing their infinite counterparts remains a significant challenge. Decisiveness has proven to be an elegant property for countable Markov chains: it is general enough to be satisfied by several natural classes of countable Markov chains, and it is a sufficient condition for simple qualitative and approximate quantitative model-checking algorithms to exist.
  
In contrast, existing works on the formal analysis of countable MDPs usually rely on \emph{ad hoc} techniques tailored to specific classes. We provide here a general framework to analyse countable MDPs by extending the notion of decisiveness. Compared to Markov chains, MDPs exhibit extra non-determinism that can be resolved in an adversarial or cooperative way, leading to multiple natural notions of decisiveness. We show that these notions enable the approximation of reachability and safety probabilities in countable MDPs using simple model-checking procedures.

We then instantiate our generic approach to two concrete classes of models inducing countable MDPs: non-deterministic probabilistic lossy channel systems and partially observable MDPs. This leads to an algorithm to approximately compute safety probabilities in each of these classes.

\keywords{Markov decision processes \and Reachability \and Decisiveness \and Lossy channel systems \and Partially observable Markov decision processes.}
\end{abstract}
\section{Introduction}

Formal methods for systems with random or unknown behaviours call for
models with probabilistic aspects, and appropriate automated
verification techniques. One of the simplest classes of
probabilistic models is the one of Markov chains.
The verification of finite-state Markov chains has been thoroughly studied in the literature and is supported by multiple mature tools such as PRISM~\cite{KNP-cav11} and STORM~\cite{DJKV17}.

\paragraph{Countable Markov chains.} In some cases,
\emph{finite} Markov chains fall short at providing an appropriate
modelling formalism, and \emph{infinite} Markov chains must be
considered. There are two general directions for the model checking of
infinite-state Markov chains. One option is to focus on Markov chains
generated in a specific way; for instance, when the underlying
transition system is the configuration graph of a lossy channel
system~\cite{IN97,ABRS05}, a pushdown automaton~\cite{KEM06}, or a one-counter
system~\cite{EWY10}. In this case, \emph{ad hoc} model-checking techniques
have been developed for the qualitative and quantitative analysis. The
second option is to establish general criteria on infinite Markov
chains that are sufficient for their qualitative and/or
quantitative model checking to be feasible.

Abdulla et~al.\ explored the latter direction and proposed the elegant
notion of \emph{decisive} Markov chains~\cite{ABM07}. Intuitively,
decisive countably infinite Markov chains exhibit certain
desirable properties of finite-state Markov chains. For instance, one
such property is that if a state is continuously reachable with a
positive probability, then it will almost surely be
reached. Precisely, a Markov chain is decisive (with respect to a
target state $\Goal$, from a given initial state $s_0$) if almost all
runs from $s_0$ either reach $\Goal$ or end in states from which
$\Goal$ is no longer reachable.
This is convenient to deal
with \emph{reachability objectives}, i.e., the event of reaching a
specified set of states.
Assuming decisiveness, the qualitative model checking of reachability
objectives reduces---as in the finite case---to simple graph
analysis. Moreover, decisiveness is the property that allows for
approximating the probability of reachability objectives up to any
desired error margin and for sampling trajectories towards statistical
model-checking of infinite Markov chains~\cite{BBH24}. While certain
decisive classes have been exhibited~\cite{ABM07},
decidability of the decisiveness property has been shown in some other
classes~\cite{FHY23}.
A stronger
property for countable Markov chains is the existence of a
\emph{finite attractor}, i.e., a finite set of states that is reached
almost surely from any state of the Markov chain. Sufficient
conditions for the existence of a finite attractor are given
in~\cite{BBS-ipl05}.

\paragraph{Markov decision processes.} Purely probabilistic models are
too limited to represent features such as, e.g., the lack of any
assumption regarding scheduling policies or relative speeds (in
concurrent systems), the lack of information regarding values
that have been abstracted away (in abstract models), or the latitude
left for delayed implementation decisions (in early designs). In such
situations, it is not desirable to assume the choices to be resolved
probabilistically, and nondeterminism is needed.
\emph{Markov decision processes} (MDPs) are an extension of Markov chains with nondeterministic
choices; they exhibit both nondeterminism and probabilistic phenomena. In MDPs, the nondeterminism is resolved by a
\emph{scheduler}, which can either be adversarial or cooperative, so that for
a given event it is relevant to consider both the infimum and
supremum probabilities that it occurs, ranging over all schedulers.

Similarly to the case of Markov chains, when considering infinite systems, one can either opt for
\emph{ad hoc} model-checking algorithms for classes of infinite-state
MDPs, or derive generic results under appropriate assumptions. In the
first scenario, one can mention MDPs which are generated by lossy
channel systems~\cite{ABRS05,BBS-acmtocl07}, with nondeterministic
action choices and probabilistic message losses. Up to our knowledge,
only qualitative verification algorithms---based on the
finite-attractor property---have been developed.
In particular, the existence of a scheduler that ensures a
reachability objective with probability~$1$ (or with positive
probability) is decidable for lossy channel systems~\cite{BBS-acmtocl07}; however, the existence
of a scheduler ensuring a B\"uchi objective with positive probability
is undecidable~\cite{ABRS05}.
More generally, there are also examples of \emph{games} on infinite arenas with underlying tractable model for which
decidability results exist: recursive concurrent stochastic
games~\cite{BBKO11,EYlmcs08}, one-counter stochastic
games~\cite{BBEfsttcs10,BBEKicalp11} or lossy channel
systems~\cite{BS-qapl13,ACMSlmcs14}.
In the second scenario, general
countable MDPs have been considered with the aim of characterising the value function for various quantitative objectives~\cite{Put94}
or identifying the resources (memory requirements, randomness) needed by optimal or
$\epsilon$-optimal schedulers~\cite{Orn69,Put94,KMSTW20}. Up to our
knowledge, there are however no generic approaches to provide
quantitative model-checking algorithms. This is the purpose of this
paper.

\paragraph{Contributions.}
In this paper, we address the design of generic algorithms for the
quantitative model checking of reachability objectives in countable
MDPs. To do so, we first build on the seminal work on decisive Markov
chains~\cite{ABM07} and explore how the notion of decisiveness can be
extended to Markov decision processes. We propose two notions of
decisiveness, called \emph{inf-decisiveness} and
\emph{sup-decisiveness}, which differ on whether the resolution of
nondeterminism is adversarial or cooperative. These notions are
natural extensions of the existing decisiveness for Markov
chains. Second, we provide approximation schemes for the infimum and
supremum probabilities of reachability objectives. These schemes
provide a non-decreasing sequence of lower bounds, as well as a
non-increasing sequence of upper bounds, for the probability one
wishes to compute. Third, we identify sufficient conditions related to
decisiveness for the two sequences to converge towards the same limit,
which is necessary for the scheme to terminate for any given error
margin.  We obtain that for $\inf$-decisive MDPs, one can approximate
the infimum reachability probability up to any error, and for
$\sup$-decisive MDPs, one can approximate the supremum reachability
probability up to any error.

We end the paper by instantiating our generic approach to two concrete classes of models inducing countably infinite MDPs of very different nature: \emph{non-deterministic probabilistic lossy channel systems} and finite \emph{partially observable MDPs}.
Using decisiveness, we show in both classes that the infimum reachability probabilities can be approximated up to any desired precision.
To the best of our knowledge, this is the first time that \emph{quantitative} model-checking algorithms are provided for these classes.
As we will discuss, existing algorithms  often focus on the \emph{qualitative} problems (e.g., whether there is a scheduler reaching a state almost surely) due to the undecidability of most other quantitative problems.

For consistency, we mostly discuss \emph{reachability} objectives throughout the paper. However, note that minimising the probability of a reachability objective is equivalent to maximising the probability of the dual \emph{safety} objective (consisting of avoiding a specified set of states).
All results regarding the infimum probability of a reachability objective can therefore be thought of as results about the supremum probability of a safety objective (and vice versa).

\section{Preliminaries} \label{sec:preliminaries}

\subsection{Markov decision processes}

\begin{definition}
  A \emph{Markov decision process} (MDP) is a tuple
  $\calM = (S,\Act,\prob)$ where $S$ is a countable set of states,
  $\Act$ is a countable set of actions,
  $\prob \colon S \times \Act \times S \to [0,1] \cap \bbQ$ is a
  probabilistic transition function satisfying
  $\sum_{s' \in S} \prob(s,a,s') \in \{0,1\}$ for all
  $(s,a) \in S \times \Act$.
\end{definition}

An MDP $\calM$ is \emph{finite} if $S$ is finite. Let $\calM = (S,\Act,\prob)$ be an MDP.
Given $(s,a) \in S \times \Act$, we say that the action~$a$ is \emph{enabled} at state~$s$
whenever $\sum_{s' \in S} \prob(s,a,s') = 1$.
We write $\En(s)$ for the set of actions enabled at $s$.
We assume that each state has at least one enabled action.
A state $s$ is \emph{absorbing} if for all enabled actions $a\in\En(s)$, $\prob(s, a, s) = 1$.
The MDP~$\calM$ is \emph{finitely action-branching} if for every $s\in S$, $\En(s)$ is finite. It is \emph{finitely \probBranching}
if for every $(s,a) \in S \times \Act$, the support of $\prob(s,a,\cdot)$
is finite. It is \emph{finitely branching} if it is both finitely
action-branching and finitely \probBranching.

A \emph{history} (resp.\ \emph{path}) in $\calM$ is an element
$s_0 s_1 s_2 \cdots$ of $S^+$ (resp.\ $S^\omega$) such that for every relevant
$i \ge 0$, there is $a_i \in \Act$ such that
$\prob(s_i,a_i,s_{i+1})>0$ (in particular, $a_i$ is enabled at
$s_i$). We write $\Hist(\calM)$ for the set of histories in~$\calM$
and $\Paths(\calM)$ for the set of paths in $\calM$. We define the
\emph{length} of a history $h = s_0 s_1 \cdots s_k$ as $k$, and denote
its last state by $\last(h) = s_k$. We sometimes write $h\cdot s$ for a
history ending in a state $s$, to emphasise its last state.

We consider the $\sigma$-algebra generated by cylinders in
$\Paths(\calM)$: for a history $h \in \Hist(\calM)$, the \emph{cylinder
generated by $h$} is
\[
\Cyl(h) = \{\rho \in \Paths(\calM) \mid h\ \text{is a prefix of}\
\rho\}
\enspace.
\]

\begin{definition}
  A \emph{scheduler} in $\calM$ is a function
  $\sigma \colon \Hist(\calM) \to \Dist(\Act)$ which assigns a
  probability distribution over actions to any history, with the
  constraint that for every $h\in \Hist(\calM)$, the support of
  $\sigma(h)$ is included in $\En(\last(h))$. We write $\Sched(\calM)$ for the set of schedulers
  in~$\calM$.
\end{definition}

Schedulers are sometimes called \emph{strategies} or \emph{policies} in the literature.
We fix a scheduler $\sigma$ in $\calM$.
If $\sigma$ only depends on the last state of histories,
i.e., if $\last(h) = \last(h')$ implies
$\sigma(h) = \sigma(h')$, then it is called \emph{positional}. If for
every history~$h$, $\sigma(h)$ is a Dirac probability measure, it is said
\emph{pure}. A pure and positional scheduler can alternatively be
described as a function $\sigma \colon S \to \Act$. We write
$\SchedPurePos(\calM)$ for the set of pure and positional schedulers in
$\calM$, and $\SchedPureMemoryful(\calM)$ for the set of pure
(\emph{a priori} not positional, that is, \emph{history-dependent}) schedulers.

Given a scheduler $\sigma$ in $\calM$ and an initial state
$s_0 \in S$, one can define a probability measure
$\Prob^\sigma_{\calM,s_0}$ on $\Paths(\calM)$ inductively as follows:
\begin{itemize}
\item $\Prob^\sigma_{\calM,s_0}(\Cyl(s_0)) = 1$;
\item if $h = s_0 \cdots s_k \in \Hist(\calM)$ and $h \cdot s_{k+1}
  \in \Hist(\calM)$, then
  \[
  \Prob^\sigma_{\calM,s_0}(\Cyl(h \cdot s_{k+1})) =
  \Prob^\sigma_{\calM,s_0}(\Cyl(h)) \cdot \sum_{a \in \En(\last(h)) } \sigma(h)(a) \cdot \prob(s_k,a,s_{k+1})
  \enspace.
  \]
\end{itemize}
Equivalently, it is the probability measure in the (infinite)
Markov chain~$\calM_\sigma$ induced by the scheduler $\sigma$ on $\calM$.

\begin{figure}[hbt]
  \centering
  \begin{tikzpicture}[xscale=1]
    \tikzstyle{j0}=[draw,text centered,rounded corners=2pt]
    \path (7,1.5) node[j0] (goalir) {$\Goal$};
    \path (7,-1.5) node[j0] (sinkir) {$\bad$};
    \path (5,1.5) node[j0] (goali) {$\Goal$};
    \path (5,-1.5) node[j0] (sinki) {$\bad$};
    \path (3,1.5) node[j0] (goalil) {$\Goal$};
    \path (3,-1.5) node[j0] (sinkil) {$\bad$};
    \path (1,1.5) node[j0] (goal1) {$\Goal$};
    \path (1,-1.5) node[j0] (sink1) {$\bad$};
    \path (-1,1.5) node[j0] (goal0) {$\Goal$};
    \path (-1,-1.5) node[j0] (sink0) {$\bad$};
    
    \path (-1,0) node[j0] (s0) {$s_1$};
    \path (1,0) node[j0] (s1) {$s_2$};
    \path (2,0) node (dots) {$\cdots$};
    \path (3,0) node[j0] (sil) {$s_{i{-}1}$};
    \path (5,0) node[j0] (si) {$s_i$};
    \path (7,0) node[j0] (sir) {$s_{i{+}1}$};
    \path (8,0) node (dots) {$\cdots$};
    \path (-3,0) node[j0] (goal) {$\Goal$};

    \path (s0) edge[-latex',bend left] node[pos=.5,above] {$\alpha,p$} (s1);
    \path (si) edge[-latex',bend left] node[pos=.5,above] {$\alpha,p$} (sir);
    \path (sil) edge[-latex',bend left] node[pos=.5,above] {$\alpha,p$} (si);

    \path (sir) edge[-latex',bend left] node[pos=.5,below] {$\alpha,1{-}p$} (si);
    \path (si) edge[-latex',bend left] node[pos=.5,below] {$\alpha,1{-}p$} (sil);
    \path (s1) edge[-latex',bend left] node[pos=.5,below] {$\alpha,1{-}p$} (s0);
    
    \path (si) edge[-latex'] node[pos=.75,right] {$\beta,q$} (goali);
    \path (si) edge[-latex'] node[pos=.75,left] {$\beta,1{-}q$} (sinki);

    \path (sil) edge[-latex'] node[pos=.75,right] {$\beta,q$} (goalil);
    \path (sil) edge[-latex'] node[pos=.75,left] {$\beta,1{-}q$} (sinkil);

    \path (sir) edge[-latex'] node[pos=.75,right] {$\beta,q$} (goalir);
    \path (sir) edge[-latex'] node[pos=.75,left] {$\beta,1{-}q$} (sinkir);

    \path (s1) edge[-latex'] node[pos=.75,right] {$\beta,q$} (goal1);
    \path (s1) edge[-latex'] node[pos=.75,left] {$\beta,1{-}q$} (sink1);
    \path (s0) edge[-latex'] node[pos=.75,right] {$\beta,q$} (goal0);
    \path (s0) edge[-latex'] node[pos=.75,left] {$\beta,1{-}q$} (sink0);

    \path (s0) edge[-latex',bend left] node[pos=.5,below] {$\alpha,1{-}p$} (goal);
  \end{tikzpicture}
  \caption{Example of a finitely branching MDP with infinite state
    space. For readability, the absorbing states
    $\Goal$ and $\bad$ are duplicated in the figure.
    Self-loops on absorbing states are omitted.}
  \label{fig:ex-mdp}
\end{figure}

Figure~\ref{fig:ex-mdp} presents an example of a countably infinite MDP, which is finitely
branching. Under a scheduler which always selects $\alpha$, this
yields a random walk~\cite[Section~3.1]{Ser13}.
It is ``diverging'' if $p>\frac{1}{2}$, which entails that the probability $\lambda_p$ \emph{not} to reach $\Goal$ is positive from every state (except $\Goal$). In particular, in this case, the infimum probability of
reaching $\Goal$ depends on the relative values of $q$ and
$\lambda_p$.
          
\subsection{Optimum reachability probabilities}
Depending on the application, the non-determinism in Markov decision
processes can be thought of as adversarial or as cooperative.
For the
probability of a given event, it thus makes sense to consider
both the infimum and supremum probabilities when ranging over
all schedulers.

We describe path properties using the standard \textsf{LTL} operators
\F and \G, and their step-bounded variants \F[\le n] and \G[\le
n]. Let $\rho = s_0 s_1 \cdots \in \Paths(\calM)$ be a path in
$\calM$. If $\psi$ is a state property, the path property $\F \psi$
holds on $\rho$ if there is some index $k \in \nats$ such that $s_k$
satisfies $\psi$. Given $n\in \nats$, $\F[\le n]$ holds on $\rho$ if
there is some index $k \leq n$ such that $s_k$ satisfies $\psi$.
Dually, $\rho$ satisfies $\G \psi$ if all indices~$k\in \nats$ are
such that $s_k$ satisfies $\psi$, and $\rho$ satisfies $\G[\le n]
\psi$ if for all indices~$k \le n$, $s_k$ satisfies $\psi$. Now, given
a path property $\phi$, we write $\sem{\phi}{\calM,s_0}$ for the set
of paths from $s_0$ in $\calM$ that satisfy $\phi$.

In this paper, we focus on the optimisation of the probability of reachability
objectives, and thus aim at computing or approximating the following
values: given an MDP~$\calM$, an initial state $s_0$, a set of target states
$\targetset$, and
$\opt \in \{\inf,\sup\}$,
\[
  \Prob^{\opt}_{\calM,s_0}(\F{\targetset}) \stackrel{\text{\upshape
      def}}{=} \opt_{\sigma \in \Sched(\calM)}
  \Prob^\sigma_{\calM,s_0}(\F{\targetset}) \enspace.
\]
Without loss of generality, one can assume that $T$ consists of a
single absorbing state which we denote $\Goal$ in the sequel.

\begin{remark}
  The literature often considers \emph{safety objectives}, which correspond to events $\G \lnot T$ for $T$ a set of states.
  Note that by the duality of reachability and safety objectives, all
  results below also hold for safety objectives by inverting $\inf$ and $\sup$.
\end{remark}

For finite MDPs, the computation of the above values for $\opt = \inf$
and $\opt = \sup$ is well-known (see
e.g.~\cite[Chap.~10]{BK08}). It reduces to solving a linear
program (of linear size), resulting in a polynomial-time algorithm.
Moreover, the infimum and supremum values are
attained by pure and positional schedulers, as stated below.

\begin{lemma} \label{lemma:finiteMDP}
  Let $\calM$ be a finite MDP, $s_0$ be an initial state, and $\Goal$ be a
  target state. Then, for $\opt \in \{\inf,\sup\}$, there exists
  a pure and positional scheduler $\sigma^{\opt} \in \SchedPurePos(\calM)$ such that
  $\Prob^{\sigma^{\opt}}_{\calM,s_0}(\F \Goal) =
  \Prob^{\opt}_{\calM,s_0}(\F \Goal)$.
\end{lemma}

  Alternatively to solving a linear program, value-iteration
  techniques can also be used and often turn out to be more efficient
  in practice; see~\cite{HM18}.
They rely on a fixed-point
  characterisation (the \emph{Bellman equations}) of the values $\val^{\opt}_{\calM}(s)
  \stackrel{\text{\upshape def}}{=} \Prob^{\opt}_{\calM,s} (\F
  \Goal)$, where $\opt \in \{\inf,\sup\}$.
  This characterisation also holds for finitely action-branching countable MDPs~\cite{Put94}, and
  can even be extended to stochastic turn-based two-player games with
  reachability objectives~\cite{kucera11,BBKO11}.
Yet, the convergence of the fixed point does not imply the existence of a \emph{stopping criterion} that can be used to identify when the computed value is sufficiently close to the actual value.

We recall existing results about the complexity of optimal schedulers for reachability objectives in MDPs, which we will use in later sections. The two items below are implied respectively by~\cite[Theorem~7.3.6]{Put94} and~\cite[Theorem~B]{Orn69}.
The latter was also discussed more recently in~\cite{KMSTW20}.

\begin{lemma} \label{lemma:schedulerComplexity}
  Let $\calM = (S, \Act, \transitions)$ be a countable MDP and $\Goal\in S$ be a target state.
  \begin{enumerate}
  \item Assume $\calM$ is finitely action-branching.
  There exists $\sigma \in \SchedPurePos(\calM)$ s.t.\ for all $s
    \in S$, $\Prob^{\sigma}_{\calM,s}(\F \Goal) =
    \Prob^{\inf}_{\calM,s}(\F \Goal)$. \label{item:infOptPurePos}
  \item For all $\epsilon > 0$, there exists $\sigma \in \SchedPurePos(\calM)$ s.t.\ for all $s \in S$, $\Prob^{\sigma}_{\calM,s}(\F \Goal) \geq \Prob^{\sup}_{\calM,s}(\F \Goal) - \epsilon$. \label{item:supEpsOptPurePos}
  \end{enumerate}
\end{lemma}

A couple of remarks are of interest:
\begin{itemize}
\item The finite action-branching assumption is needed for the first item. Optimal schedulers for infimum reachability probabilities may not exist for infinitely branching MDPs, and $\epsilon$-optimal schedulers may even require memory~\cite[Theorem~3]{KMSW17}.
\item For supremum reachability probabilities,
optimal schedulers may not exist, even in finitely branching MDPs; such an example is provided
in~\cite[Figure~1]{KMSTW20}. This is why we only consider $\epsilon$-optimal schedulers in the second item.
Interestingly, item~\ref{item:supEpsOptPurePos} fails to hold in MDPs with an uncountable state space~\cite[Theorem~A]{Orn69}.
As per the definition above, all MDPs in this paper are assumed countable.
\end{itemize}

\paragraph{Approximation schemes and algorithms.}
Even if characterisations of the values
exist in infinite MDPs~\cite{Put94}, no general
algorithm is known to compute $\Prob^{\inf}_{\calM,s_0}(\F{\Goal})$
and $\Prob^{\sup}_{\calM,s_0}(\F{\Goal})$, or to decide whether
these values exceed a threshold. Of course, such algorithms would
very much depend on the representation of infinite MDPs.
  
In this paper, we aim at providing generic \emph{approximation
schemes} for infimum and supremum reachability probabilities in
countable MDPs. 

\begin{definition}
  An \emph{approximation algorithm} takes as an input an MDP $\calM$, an
  initial state $s_0$, a target state $\Goal$, an optimisation
  criterion $\opt \in \{\inf,\sup\}$, and a precision $\epsilon>0$, and
  returns a value $v$ such that
  $|v- \Prob^{\opt}_{\calM,s_0}(\F \Goal)| \le \epsilon$.
\end{definition}

In this paper, we provide generic \emph{approximation schemes}, defined by two
sequences $(r_n^-)_n$ and $(r_n^+)_n$, respectively non-decreasing and
non-increasing, such that for every $n\in\nats$,
$r_n^- \le \Prob^{\opt}_{\calM,s_0}(\F \Goal) \le r_n^+$.  An
approximation scheme is \emph{converging} on $\calM$ from $s_0$ if for
every precision $\epsilon>0$, there exists $n \in \bbN$ such that
$|r_n^+ - r_n^-| \leq \epsilon$ (which means that any $v$ in the
interval $[r_n^-,r_n^+]$ is a solution to our problem).  An
approximation scheme yields an \emph{approximation algorithm} if it is
converging and the values $r_n^-$ and $r_n^+$ can be effectively
computed for arbitrarily large $n$.

Converting a converging approximation scheme into an algorithm requires hypotheses on the MDPs considered (e.g., finitely representable, restrictions on branching). In Section~\ref{sec:constructions}, we focus on approximation schemes; in Section~\ref{sec:convergingSchemes}, we investigate when these schemes are converging; in Section~\ref{sec:applications}, we show on specific classes of models that these converging schemes can be made into algorithms.
All these results will be enabled by the notion of \emph{decisiveness} for MDPs, discussed in Section~\ref{sec:decisiveness}.

\section{Decisiveness for MDPs} \label{sec:decisiveness} In this
section, we define several flavours of \emph{decisiveness} for MDPs,
inspired by the notion of decisiveness defined for Markov
chains~\cite{ABM07}. We fix an MDP $\calM = (S,\Act,\prob)$ and an
absorbing target state $\Goal \in S$ for the rest of this section.

\subsection{Avoid sets}
For Markov chains, the first ingredient to define
decisiveness is the notion of \emph{avoid set}, which is the set of states
from which one can no longer reach~$\Goal$ (the avoid set was denoted
$\widetilde{\Goal}$ in~\cite{ABM07}). We extend this notion in several
directions.

If $\sigma \in \SchedPurePos(\calM)$, we define the \emph{avoid set of
$\calM$ w.r.t.\ $\sigma$} as:
\[
  \Avoid^\sigma_\calM(\Goal) = \big\{s \in S \mid
  \Prob^\sigma_{\calM,s}(\F \Goal) = 0\big\} \enspace.
\]
This is the avoid set of the Markov chain (as defined in~\cite{ABM07}) induced by the pure and positional scheduler $\sigma$ on $\calM$.

We also define two other notions of \emph{avoid set}, depending on whether one
considers the infimum or supremum value over schedulers.
For $\opt \in \{\inf,\sup\}$, we let:
\[
\Avoid^{\opt}_\calM(\Goal) = \big\{s \in S \mid \opt_{\sigma \in
    \Sched(\calM)} \Prob^\sigma_{\calM,s}(\F \Goal) = 0\big\} \enspace.
\]

Note that
\begin{eqnarray*}
  \sup_{\sigma \in \Sched(\calM)} \Prob^\sigma_{\calM,s}(\F \Goal) =
  0 & \quad \text{iff}\quad & \forall \sigma \in \Sched(\calM),\
                              \Prob^\sigma_{\calM,s}(\F \Goal) = 0 \\
    & \quad \text{iff} \quad & \forall
                               \sigma \in \SchedPurePos(\calM),\
                               \Prob^\sigma_{\calM,s}(\F \Goal) = 0
                               \enspace,
\end{eqnarray*}
where the second equivalence can be shown using Lemma~\ref{lemma:schedulerComplexity}, item~\ref{item:supEpsOptPurePos}. We deduce that:
\[
  \Avoid_\calM^{\sup}(\Goal) = \bigcap_{\sigma \in
    \SchedPurePos(\calM)} \Avoid^\sigma_{\calM}(\Goal) \enspace.
\]
In contrast, it may happen that
$\inf_{\sigma \in \Sched(\calM)} \Prob^\sigma_{\calM,s}(\F \Goal) =
0$, yet there is no $\sigma \in \Sched(\calM)$ such that
$\Prob^\sigma_{\calM,s}(\F \Goal) = 0$. For instance, on the MDP $\calM^{\mathtt{L}}$ in
Figure~\ref{fig:distinguishingAvoidSets} (left), when choosing action $\alpha_i$
from $s_0$, the probability of $\F \Goal$ is $\frac 1 {2^i}$. Recall
that given Lemma~\ref{lemma:schedulerComplexity} (item~\ref{item:infOptPurePos}), this behaviour
requires infinite action-branching: when $\calM$ is finitely action-branching, we have that there exists a pure and positional scheduler $\sigma_{\inf}$ such that $\Avoid^{\sigma_{\inf}}_{\calM}(\Goal) = \Avoid^{\inf}_{\calM}(\Goal)$.

\begin{figure}[htb]
  \centering
  \begin{minipage}{.62\textwidth}
    \centering
  \begin{tikzpicture}[xscale=1]
    \tikzstyle{j0}=[draw,text centered,rounded corners=2pt]
    \path (-0.7,0) node[j0] (goal) {$\Goal$};
    \path (3.8,-1.5) node[j0] (sink) {$\bad$};
    \path (-0.5,1.2) node[] (ML) {$\calM^{\mathtt{L}}$};
    
    \path (3.8,1.5) node[j0] (s0) {$s_0$};
    \path (1,0) node[j0] (s1) {$s_1$};
    \path (1.7,0) node (dots) {$\cdots$};
    \path (2.4,0) node[j0] (sil) {$s_{i{-}1}$};
    \path (3.8,0) node[j0] (si) {$s_i$};
    \path (5.2,0) node[j0] (sir) {$s_{i{+}1}$};
    \path (6,0) node (dots) {$\cdots$};

    \path (s0) edge[-latex'] node[pos=.4,left] {$\alpha_i,1$} (si);
    
    \draw[-latex',rounded corners=2mm] (s0.-180) -| node[pos=0.2,above]
    {$\alpha_1,1$} (s1.90);
    \draw[-latex',rounded corners=2mm] (s0.0) -| node[pos=0.3,above]
    {$\alpha_{i{+}1},1$} (sir.90);

     \path (sir) edge[-latex'] node[pos=.5,above] {$\beta,\frac 1 2$} (si);
     \path (si) edge[-latex'] node[pos=.5,above] {$\beta,\frac 1 2$} (sil);
    
    \path (si) edge[-latex'] node[pos=.5,left] {$\beta,\frac 1 2$} (sink);

    \draw[-latex',rounded corners=2mm] (sir.-90) |- node[pos=0.7,below]
    {$\beta,\frac 1 2$} (sink);

    \draw[-latex',rounded corners=2mm] (s1.-90) |- node[pos=0.7,below]
    {$\beta,\frac 1 2$} (sink);

    \path (s1) edge[-latex'] node[pos=.5,above] {$\beta,\frac 1 2$} (goal);
  \end{tikzpicture}
  \end{minipage}%
    \begin{minipage}{.38\textwidth}
      \centering
      \begin{tikzpicture}
        \tikzstyle{j0}=[draw,text centered,rounded corners=2pt]
        \path (0,0) node[j0] (s0) {$s_0$};
        \path (2,.5) node[j0] (goal) {$\Goal$};
        \path (2,-.5) node[j0] (sink) {$\bad$};
        \path (0.1,1.2) node[] (MR) {$\calM^{\mathtt{R}}$};
        \path (-0,1.5) node[] (empty1) {};
        \path (-0,-1.5) node[] (empty2) {};
    
        \path (s0) edge[-latex'] node[pos=.5,above,sloped] {$\beta,\frac 1 2$} (goal);
        \path (s0) edge[-latex'] node[pos=.5,below,sloped] {$\beta,\frac 1 2$} (sink);
    
        \draw [-latex'] (s0) .. controls +(-135:1.3cm) and +(135:1.3cm)
        .. (s0) node [midway,left] {$\alpha,1$};
      \end{tikzpicture}
    \end{minipage}
    \caption{Left: MDP $\calM^{\mathtt{L}}$ for which $\Prob^{\inf}_{\calM^{\mathtt{L}}, s_0}(\F \Goal) = 0$, yet for every scheduler $\sigma$, $\Prob^{\sigma}_{\calM^{\mathtt{L}}, s_0}(\F \Goal) > 0$.
    Right: MDP $\calM^{\mathtt{R}}$ such that $\Avoid^{\sup}_{\calM^{\mathtt{R}}}(\Goal) \neq \Avoid^\sigma_{\calM^{\mathtt{R}}}(\Goal)$ for some scheduler $\sigma$.}
  \label{fig:distinguishingAvoidSets}
\end{figure}

Following the definitions, for every $\sigma \in \SchedPurePos(\calM)$, we have
\[
  \Avoid^{\sup}_\calM(\Goal) \subseteq \Avoid^\sigma_{\calM}(\Goal) \subseteq
  \Avoid^{\inf}_\calM(\Goal) \enspace.
\]
We show two examples to illustrate various kinds of avoid sets and when they can differ.

\begin{example} \label{ex:distinguishingAvoidSets}
    Consider the three-state MDP $\calM^{\mathtt{R}}$ on the right of Figure~\ref{fig:distinguishingAvoidSets}.
    We have that $\Avoid^{\sup}_{\calM^{\mathtt{R}}}(\Goal) \neq \Avoid^\sigma_{\calM^{\mathtt{R}}}(\Goal)$ for some scheduler $\sigma$: indeed, $\Avoid^{\sup}_{\calM^{\mathtt{R}}}(\Goal) = \{\bad\}$, but for the pure and positional scheduler $\sigma_\alpha$ that chooses $\alpha$ in $s_0$, we have $\Avoid^{\sigma_\alpha}_{\calM^{\mathtt{R}}}(\Goal) = \{\bad, s_0\}$.

    Consider again the infinitely branching MDP $\calM^{\mathtt{L}}$ given
    in Figure~\ref{fig:distinguishingAvoidSets} (left).  We have that
    $\Avoid^{\sigma}_{\calM^{\mathtt{L}}}(\Goal) \neq
    \Avoid^{\inf}_{\calM^{\mathtt{L}}}(\Goal)$ for all pure and
    positional schedulers~$\sigma$: indeed,
    $\Avoid^{\inf}_{\calM^{\mathtt{L}}}(\Goal) = \{s_0, \bad\}$, but
    $\Avoid^{\sigma}_{\calM^{\mathtt{L}}}(\Goal) = \{\bad\}$ for all
    such $\sigma$.
\end{example}

\subsection{Decisiveness properties}
We now define several notions of decisiveness for MDPs, which are
natural extensions of the decisiveness for Markov
chains~\cite{ABM07}. 

\begin{definition}[Decisiveness]
 \label{def:decisive}
 Let $\calM = (S,\Act,\prob)$ be an MDP, $\Goal \in S$ be an absorbing target state, and $s \in S$ be a state.
 \begin{itemize}
 \item Let $\sigma \in \SchedPurePos(\calM)$. MDP $\calM$ is said
   \emph{$\sigma$-decisive w.r.t.\ $\Goal$ from $s$} whenever
   \[
     \Prob^\sigma_{\calM,s}\big(\F \Goal \vee \F
     \Avoid_\calM^{\sigma}(\Goal) \big) = 1 \enspace.
   \]
 \item MDP $\calM$ is \emph{univ-decisive w.r.t.\ $\Goal$ from $s$}
   whenever for all $\sigma \in \SchedPurePos(\calM)$, $\calM$ is
   $\sigma$-decisive w.r.t.\ $\Goal$ from $s$; that is,
  \[
    \forall \sigma \in \SchedPurePos(\calM),\
    \Prob^\sigma_{\calM,s}\big(\F \Goal \vee \F
    \Avoid^\sigma_{\calM}(\Goal) \big) = 1 \enspace.
  \]
 \item Let $\opt \in \{\inf,\sup\}$. MDP $\calM$ is \emph{$\opt$-decisive
     w.r.t.\ $\Goal$ from $s$} whenever
   \[
     \forall \sigma \in \SchedPurePos(\calM),\
     \Prob^\sigma_{\calM,s}\big(\F \Goal \vee \F
     \Avoid_\calM^{\opt}(\Goal) \big) = 1 \enspace.
   \]
 \end{itemize}
\end{definition}

  In the case of Markov chains, all these  notions are equivalent and coincide with the notion of decisiveness defined in~\cite{ABM07}. In the case of MDPs, these notions are different.  
  Since
  $\Avoid^{\sup}_\calM(\Goal) \subseteq \Avoid^\sigma_{\calM}(\Goal) \subseteq \Avoid^{\inf}_\calM(\Goal)$ (for all pure and positional schedulers $\sigma$),
  $\sup$-decisiveness is a stronger condition than univ-decisiveness, which is itself stronger than
  $\inf$-decisiveness.
  We show examples distinguishing these notions.
  
\begin{example} \label{ex:distinguishingDecisiveness}
  To distinguish $\sup$-decisiveness from univ-decisiveness, we go back to the three-state MDP $\calM^{\mathtt{R}}$ from Example~\ref{ex:distinguishingAvoidSets} (Figure~\ref{fig:distinguishingAvoidSets}, right).
  Recall that $\Avoid^{\sup}_{\calM^{\mathtt{R}}}(\Goal) = \{\bad\}$. Hence, for the scheduler $\sigma_\alpha$ that chooses $\alpha$ in~$s_0$, we have $\Prob^{\sigma_\alpha}_{\calM^{\mathtt{R}},s_0}(\F \Goal \lor \F \Avoid^{\sup}_{\calM^{\mathtt{R}}}(\Goal)) = 0$.
  Hence, $\calM^{\mathtt{R}}$ is not $\sup$-decisive w.r.t.~$\Goal$ from~$s_0$.
  On the other hand, we can show it is univ-decisive by considering the only two pure and positional schedulers $\sigma_\alpha$ and $\sigma_\beta$.
  We have $\Avoid^{\sigma_\alpha}_{\calM^{\mathtt{R}}}(\Goal) = \{\bad, s_0\}$, so $\Prob^{\sigma_\alpha}_{\calM^{\mathtt{R}},s_0}(\F \Goal \lor \F \Avoid^{\sigma_\alpha}_{\calM^{\mathtt{R}}}(\Goal)) = 1$.
  We have $\Avoid^{\sigma_\beta}_{\calM^{\mathtt{R}}}(\Goal) = \{\bad\}$, so $\Prob^{\sigma_\beta}_{\calM^{\mathtt{R}},s_0}(\F \Goal \lor \F \Avoid^{\sigma_\beta}_{\calM^{\mathtt{R}}}(\Goal)) = 1$.
  Hence, $\calM^{\mathtt{R}}$ is univ-decisive w.r.t.~$\Goal$ from~$s_0$.

  To distinguish univ-decisiveness from $\inf$-decisiveness, consider the MDP $\calM$ that was depicted in Figure~\ref{fig:ex-mdp}, and assume that $p > \frac{1}{2}$ (i.e., the random walk when choosing $\alpha$ repeatedly is \emph{diverging}).
  Add to this MDP $\calM$ an initial state $s_0$ from which one can go to any state $s_i$ with an action $\alpha_i$. We have that $\Avoid^{\inf}_{\calM}(\Goal) = \{s_0, \bad\}$, since the probability to reach $\Goal$ can be made arbitrarily small by choosing $\alpha_i$ for a sufficiently large $i$.
  Hence, $\Prob^{\sigma}_{\calM,s_0}(\F \Goal \lor \F \Avoid^{\inf}_{\calM}(\Goal)) = 1$ for all schedulers~$\sigma$,
  so $\calM$ is $\inf$-decisive w.r.t.\ $\Goal$ from $s_0$.
  However, for all fixed schedulers $\sigma$, we have $\Avoid^{\sigma}_{\calM}(\Goal) = \{\bad\}$, so $\Prob^{\sigma}_{\calM,s_0}(\F \Goal \lor \F \Avoid^{\sigma}_{\calM}(\Goal)) < 1$. So $\calM$ is not univ-decisive w.r.t.\ $\Goal$ from $s_0$.

  We finally show an example which is not $\inf$-decisive (and thus, not univ-decisive or $\sup$-decisive either).
  Consider again the MDP $\calM$ in Figure~\ref{fig:ex-mdp}, also with $p > \frac{1}{2}$, but this time without the extra state $s_0$.
  It is such that
  $\Avoid^{\inf}_{\calM}(\Goal) = \{\bad\}$ since there is a positive probability to visit $\Goal$ from every state (except from~$\bad$), no matter the scheduler. The MDP~$\calM$ is not $\inf$-decisive from~$s_0$ w.r.t.~$\Goal$, since the scheduler
  which always selects~$\alpha$ avoids $\Goal$ and $\bad$ with
  positive probability~$\lambda_p$.
\end{example}

\begin{remark}
  Observe that the definitions of avoid sets and decisiveness only quantify over \emph{pure and positional} schedulers. This will turn out to be sufficient for our purposes, notably thanks to the scheduler complexity results from Lemma~\ref{lemma:schedulerComplexity}.

  Also, intuitively, quantifying over arbitrary schedulers would allow the cause for non-decisiveness to arise from the scheduler rather than the structure of the MDP. This would make the properties harder to check and less commonly satisfied.
  To see why, consider again the three-state MDP $\calM^{\mathtt{R}}$ in Figure~\ref{fig:distinguishingAvoidSets} (right). Consider the (infinite-memory) scheduler $\sigma$ that, as long as $s_0$ is not left, chooses~$\alpha$ with probability $1 - \frac{1}{2^{i+1}}$ and $\beta$ with probability $\frac{1}{2^{i+1}}$ at step~$i$. This scheduler avoids~$\Goal$ with probability $\prod_i (1-\frac{1}{2^{i+1}}) > 0$. Yet, there is always a non-zero probability to reach $\Goal$. Fixing $\sigma$ induces an infinite Markov chain whose avoid set is $\{\bad\}$, but we do not have that $\{\Goal, \bad\}$ is reached with probability~$1$.
  If we were to consider such schedulers, the MDP $\calM^{\mathtt{R}}$ would not be univ-decisive.
\end{remark}

\subsection{Decisiveness criteria}
We show how to adapt two existing criteria for the decisiveness of Markov chains~\cite[Lemmas~3.4 \&~3.7]{ABM07} to MDPs.
In both cases, we generalise the definition of a property to MDPs and show that this property implies some form of decisiveness.
The proofs are deferred to Appendix~\ref{app:decisiveness}.

The first criterion relates to the existence of a \emph{finite attractor}.
It will be used in Section~\ref{subsec:nplcs} to show that a class of infinite MDPs (\emph{NPLCSs}) is $\inf$-decisive.

\begin{definition}
  Let $\calM = (S,\Act,\prob)$ be an MDP.
  We say that $\calM$ has a \emph{finite attractor} if there exists a finite set $A \subseteq S$ such that from all states $s \in S$, for all schedulers $\sigma \in \SchedPurePos(\calM)$, $\Prob^\sigma_{\calM,s}(\F A) = 1$.
\end{definition}

\begin{remark}
  Quantifying only over \emph{pure and positional} schedulers in the definition of a finite attractor is sufficient for our purposes (such as the upcoming result).
  It would be stronger to require that $\Prob^\sigma_{\calM,s}(\F A) = 1$ for all $\sigma \in \Sched(\calM)$, as witnessed, e.g., by~\cite[Figure~3a]{KMSW17} with $A = \{t\}$.
\end{remark}

\begin{restatable}{proposition}{propFiniteAttractors}
  \label{prop:finiteAttractors}
  Let $\calM = (S,\Act,\prob)$ be an MDP and $\Goal\in S$ be an absorbing target state. If $\calM$ has a finite attractor, then $\calM$ is univ-decisive (hence also $\inf$-decisive) w.r.t.\ $\Goal$ from every state.
\end{restatable}

Observe that, in particular, all finite MDPs are univ-decisive and $\inf$-decisive (as for finite MDPs, the full state space $S$ is a finite attractor).
However, not all finite MDPs are
$\sup$-decisive; a counterexample was given in Example~\ref{ex:distinguishingDecisiveness}.

In finite MDPs, we can relate the notion of $\sup$-decisiveness to the notion of \emph{end component}~\cite{deAlfaroThesis}.
An end component of an MDP $\calM = (S,\Act,\prob)$ is a pair $(R, A)$ where $R \subseteq S$ and $A\colon R \to 2^\Act$ such that for all $s\in R$, $A(s) \subseteq \En(s)$ and for all $a\in A(s)$, the support of $P(s, a, \cdot)$ is included in $R$, and the graph induced by $(R, A)$ is strongly connected.
As end components are commonly studied in MDPs, we formally state the relation here; however, we will use neither this result nor the notion of end component in the sequel.

\begin{restatable}{proposition}{propEndComponents} \label{prop:endComponents}
  Let $\calM = (S,\Act,\prob)$ be a \emph{finite} MDP and $\Goal\in S$ be an absorbing target state. We have that $\calM$ is $\sup$-decisive w.r.t.\ $\Goal$ from every state if and only if for all end components $(R, A)$ of $\calM$, either $R = \{\Goal\}$ or $R \subseteq\Avoid_\calM^{\sup}(\Goal)$.
\end{restatable}

This result gives another reason why the three-state MDP $\calM^{\mathtt{R}}$ from Example~\ref{ex:distinguishingAvoidSets} is not $\sup$-decisive, as $(\{s_0\}, \{\alpha\})$ is an end component which is neither~$\{\Goal\}$ nor contained in $\Avoid_\calM^{\sup}(\Goal)$.

For finite MDPs, the property of end components used in Proposition~\ref{prop:endComponents} already appears in various works as a necessary property for the value-iteration algorithm to converge~\cite{HM18,BCCFKKPU14}.
However, the notion of end components and its related results do not carry over straightforwardly to infinite MDPs; we believe that $\sup$-decisiveness is a natural candidate for a property that is both well-defined on infinite MDPs and happens to coincide with this known property of finite MDPs.

We now extend a second decisiveness criterion by generalising the concept of \emph{globally coarse} Markov chains~\cite[Lemma~3.7]{ABM07}.
Here, this extension yields a criterion for $\sup$-decisiveness in MDPs.

\begin{definition}
  Let $\calM = (S,\Act,\prob)$ be an MDP with an absorbing target state $\Goal$ and a distinct absorbing state $\bad$ (in particular, $\bad\in\Avoid_\calM^{\sup}(\Goal)$).
  The MDP~$\calM$ is \emph{semantically stopping w.r.t\ $\Goal$ and $\bad$} if there exists $p > 0$ such that from every state~$s$, for all schedulers $\sigma\in\SchedPurePos(\calM)$, $\Prob^\sigma_{\calM,s}(\F \Goal \lor \F \bad) \ge p$.
\end{definition}

\begin{restatable}{proposition}{propCoarse}
  Let $\calM = (S,\Act,\prob)$ be an MDP,~$\Goal$ be an absorbing target state, and~$\bad$ be an absorbing state.
  If $\calM$ is semantically stopping w.r.t.~$\Goal$ and~$\bad$, then $\calM$ is $\sup$-decisive w.r.t.\ $\Goal$ from every state.
\end{restatable}

We can immediately deduce a natural syntactic class of MDPs that are $\sup$-decisive. We say that an MDP $\calM = (S,\Act,\prob)$ is \emph{stopping} if there exists $p > 0$ from every state $s$, for every action $a\in\En(s)$, $\prob(s, a, \{\Goal, \bad\}) \ge p$.
It means that there is a uniformly bounded probability that a path ``terminates'' \emph{at every step}.
This is a natural adaptation to MDPs of the concept of \emph{stopping} introduced by Shapley for stochastic games in 1953~\cite{Sha53} and used, e.g., in~\cite{Con92}.

\section{Generic approximation schemes}
\label{sec:constructions}

The objective of this section is to provide generic approximation
schemes for optimum reachability probabilities, and to understand under which conditions they are converging.
For conciseness, most proofs are omitted from this section; they can be found in Appendix~\ref{app:stepBounded} and~\ref{app:constructions}.

For the rest of this section, we let $\calM = (S,\Act,\prob)$ be an
MDP, $s_0 \in S$ be an initial state, and $\Goal \in S$ be an absorbing target state.

\subsection{Collapsing avoid sets and first approximation scheme}
\label{subsec:collapse}\label{subsec:unf}
For $\opt \in \{\inf,\sup\}$, we build a new MDP
$\calM^\opt = (S^\opt,\Act,\prob^\opt)$ by merging states in
$\Avoid^\opt_{\calM}(\target)$ into a fresh absorbing
state $\bad^\opt$. 

Formally, $\calM^\opt = (S^\opt,\Act,\prob^\opt)$ with
\begin{itemize}
\item $S^\opt = \Big(S \setminus \Avoid^\opt_\calM(\Goal) \Big) \cup
  \{\bad^\opt\}$;
\item for every $s,s' \in S^\opt \setminus \{\bad^\opt\}$, for
  every $a \in \Act$, $\prob^\opt(s,a,s') = \prob(s,a,s')$;
\item for every $s \in S^\opt \setminus \{\bad^\opt\}$,
  $\prob^\opt(s,a,\bad^\opt) = 
  \sum_{s' \in \Avoid^\opt_\calM(\Goal)} \prob(s,a,s')$;
\item for every $a \in \Act$, $\prob^\opt(\bad^\opt,a,\bad^\opt)=1$.
\end{itemize}

In both cases (when $\opt = \inf$ or when $\opt = \sup$), notice that
$\target \in S^\opt$.  W.l.o.g., we assume that the initial state
is preserved in the collapsed MDP (i.e., $s_0 \in S\cap S^\opt$);
otherwise, by definition of $\Avoid^\opt_\calM(\Goal)$,
$\opt_\sigma \Prob^\sigma_{\calM,s_0}(\F \target) =0$ and the value to
be computed is trivially $0$.

Note also the following two properties:
\begin{itemize}
\item for every $s \in S^{\inf} \setminus \{\bad^{\inf}\}$, for every
  $\sigma \in \Sched(\calM^{\inf})$,
  $\Prob^\sigma_{\calM^{\inf},s}(\F \Goal)>0$;
\item for every $s \in S^{\sup} \setminus \{\bad^{\sup}\}$, there is
  $\sigma \in \Sched(\calM^{\sup})$ s.t.
  $\Prob^\sigma_{\calM^{\sup},s}(\F \Goal)>0$.
\end{itemize}
The above constructions collapsing avoid sets preserve optimum
probabilities (with no prior assumption on $\calM$; proof in Appendix~\ref{app:constructions}):
\begin{restatable}{lemma}{Mopt}
  \label{lemma:Mopt}
 $\Prob^{\opt}_{\calM,s_0}(\F \Goal) =
    \Prob^{\opt}_{\calM^\opt,s_0}(\F \Goal)$.
\end{restatable}

According to Lemma~\ref{lemma:Mopt}, computing the supremum probability
(resp.\ infimum probability) in $\calM$ can equivalently be done in
$\calM^{\sup}$ (resp.\ $\calM^{\inf}$).

To do so, for every integer $n$, we define the following events in
$\calM^{\opt}$:
\[
\left\{\begin{array}{l}
    R_n = \F[\le n] \Goal \\
    H_n^{\opt} = \G[\le n] (\neg \Goal \wedge \neg \bad^{\opt})
    \enspace.
  \end{array}\right.
\]
In words, $R_n$ expresses that the target is \emph{reached} within $n$
steps, and $H^{\opt}_n$ denotes that the target has not been reached
within $n$ steps, but that we are still in a region from which the probability of reaching $\Goal$ is bounded away from $0$ (in the case $\opt = \inf$) or from which reaching $\Goal$ is possible with positive probability (in the case $\opt = \sup$).
Note that
$R_n \vee H_n^{\opt} = \F[\le n] \Goal \vee \G[\le n] \neg
\bad^{\opt}$.

We use these events to find lower and upper bounds on the desired probability $p = \Prob^{\opt}_{\calM^{\opt},s_0}(\F \Goal)$.
The aim is that, thanks to the step bound $n$, these bounds are easier to compute than $p$ in many classes of MDPs.
A lower bound for~$p$ is trivially given by $\Prob^{\opt}_{\calM^{\opt},s_0}(R_n)$: reaching $\Goal$ \emph{within $n$ steps} naturally implies reaching $\Goal$.
An upper bound is given by $\Prob^{\opt}_{\calM^{\opt},s_0}(R_n \vee H^{\opt}_n)$: to reach $\Goal$, it is necessary to either reach $\Goal$ within $n$ steps or to be in a state from which reaching $\Goal$ is still possible after $n$ steps.
We state these observations formally.
\begin{restatable}{lemma}{HnRn}
  \label{lemma:HnRn}
  For every initial state $s_0 \in S$ and every $n \in \nats$,
  \begin{align*}
    \Prob^{\opt}_{\calM^{\opt},s_0}(R_n) \le
    \Prob^{\opt}_{\calM^{\opt},s_0}(\F \Goal) &\le
    \Prob^{\opt}_{\calM^{\opt},s_0}(R_n \vee H^{\opt}_n)\\
    &\le
    \Prob^{\opt}_{\calM^{\opt},s_0}(R_n) +
    \Prob^{\sup}_{\calM^{\opt},s_0}(H^{\opt}_n) \enspace.
  \end{align*}
\end{restatable}

Thanks to Lemma~\ref{lemma:HnRn}, it is natural to define an
approximation scheme with $\Prob^{\opt}_{\calM^{\opt},s_0}(R_n)$ as a
lower bound, and
$\Prob^{\opt}_{\calM^{\opt},s_0}(R_n \vee H^{\opt}_n)$ as an upper
bound, as formalised in Scheme~\ref{approx_scheme1}.  If the input is a
Markov chain, this corresponds exactly to the path enumeration
algorithm from~\cite[Algorithm~1]{ABM07}.

\begin{algorithm}[H]
  \SetKwData{Left}{left}\SetKwData{This}{this}\SetKwData{Up}{up}
  \SetKwFunction{Union}{Union}\SetKwFunction{FindCompress}{FindCompress}
  \SetKwInOut{Input}{Input}\SetKwInOut{Output}{Output} \Input{An MDP
    $\calM$, $s_0 \in S$, $\Goal \in S$, and $\epsilon \in (0,1)$.}
    \Output{A value $v \in [0,1]$.} \BlankLine
  $n:=0$\;
  \Repeat{$|p_n^{\opt,+} - p_n^{\opt,-}| \leq \epsilon$}{  $n := n + 1$;\\
    $p_n^{\opt,-} := \Prob^{\opt}_{\calM^{\opt},s_0}(\F[\le n] \Goal)$;\\
    $p_n^{\opt,+} := \Prob^{\opt}_{\calM^{\opt},s_0}(\F[\le n] \Goal \vee
    \G[\le n] (\neg \Goal \wedge \neg \bad^{\opt}))$;}
  \KwRet{$p_n^{\opt,-}$}
\caption{$\ApproxScheme_1^{\opt}$}\label{approx_scheme1}
\end{algorithm}

Thanks to the last inequality of
Lemma~\ref{lemma:HnRn}, if we prove that if in some MDP,
for all schedulers, the probability of $H^\opt_n$ becomes negligible as $n$ grows, then
this ensures the convergence of the scheme for this MDP.
\begin{theorem}
  \label{theo:generic-correctness+termination}
  Let $\calM = (S,\Act,\prob)$ be an MDP, $s_0 \in S$ be an initial
  state and $\Goal \in S$ be a target state. Assume that
  $\lim_{n \to \infty} \Prob^{\sup}_{\calM^{\opt},s_0}(H^{\opt}_n) =
  0$. Then $\ApproxScheme_1^{\opt}$ provides a converging approximation
  scheme for $\Prob^{\opt}_{\calM,s_0}(\F \Goal)$.
\end{theorem}

\begin{proof}
  The sequence $(p_n^{\opt,-})_n$ is non-decreasing and the sequence
  $(p_n^{\opt,+})_n$ is non-increasing. Assuming they converge to the
  same value (which is the case
  when
  $\lim_{n \to \infty} \Prob^{\sup}_{\calM^{\opt},s_0}(H^{\opt}_n) =
  0$ thanks to Lemma~\ref{lemma:HnRn}), then $\ApproxScheme_1^{\opt}$ converges, which means that it returns an $\epsilon$-approximation of $\Prob^{\opt}_{\calM^{\opt},s_0}(\F \Goal)$.
  By Lemma~\ref{lemma:Mopt}, this corresponds to an
  $\epsilon$-approximation of $\Prob^{\opt}_{\calM,s_0}(\F \Goal)$.
  \qed
\end{proof}
 
Scheme $\ApproxScheme_1^{\opt}$ is based on unfoldings of the MDP
to deeper and deeper depths. Precisely, the lower bound
$p_n^{\opt,-}$ is the probability in the unfolding up to depth $n$ of
histories that reach $\target$; $p_n^{\opt,+}$ is the probability in
the same unfolding of histories that either reach $\target$ or end in a
state from which there is a path to $\target$ in $\calM^{\opt}$.

For completeness, we further clarify the relevance of the sequences $(p_n^{\opt,-})_n$ and $(p_n^{\opt,+})_n$ with respect to our aim.
Focusing on $(p_n^{\opt,-})_n$, observe that what the scheme computes is (an approximation) of the limit of this sequence, i.e., $\lim_n \opt_\sigma \Prob^{\sigma}_{\calM^{\opt},s_0}(\F[\le n] \Goal)$.
Yet, the actual value we want to approximate is $\Prob^{\opt}_{\calM^{\opt},s_0}(\F \Goal)$, which is equal to $\opt_\sigma \lim_n \Prob^{\sigma}_{\calM^{\opt},s_0}(\F[\le n] \Goal)$.
The convergence of the scheme is a sufficient condition for $\lim_{n\to\infty} p_n^{\opt, -} = \Prob^{\opt}_{\calM^{\opt},s_0}(\F \Goal)$: indeed, given Lemma~\ref{lemma:HnRn}, this is the only possible limit value.
Independently of the convergence of the scheme, these two values also always coincide in finitely action-branching MDPs.
This statement is proved in Appendix~\ref{app:stepBounded}.

\begin{lemma} \label{lem:schemeUseful}
  Let $\calM$ be a finitely action-branching MDP.
  Then,
  \[
    \lim_{n\to\infty} p_n^{\opt, -} = \Prob^{\opt}_{\calM^{\opt},s_0}(\F \Goal) \ \text{and} \ \lim_{n\to\infty} p_n^{\opt, +} = \Prob^{\opt}_{\calM^{\opt},s_0}(\F \Goal \vee \G (\neg \Goal \wedge \neg \bad^{\opt}))\enspace.
  \]
\end{lemma}

However, this fails to hold in some infinitely branching MDPs.

\begin{example} \label{ex:schemeUseless}
  Consider the infinitely branching MDP $\calM$ from Figure~\ref{fig:schemeUseless}.
  Note that $\Avoid^{\inf}_{\calM}(\Goal) = \emptyset$, so $\calM^{\inf} = \calM$.
  From~$s_0$, there is a single choice $\alpha_i$ ($i \ge 1$) to make, determining that $\Goal$ will be reached in exactly $i$ steps.
  This means that for all $n$, it is possible to avoid seeing $\Goal$ within $n$ steps (e.g., by choosing $\alpha_{n+1}$).
  Hence, for all $n$, $\Prob^{\inf}_{\calM,s_0}(\F[\le n] \Goal) = 0$.
  We deduce that $\lim_n p_n^{\inf, -} = 0$.

  However, $\Prob^{\inf}_{\calM,s_0}(\F \Goal) = 1$, as any action leads surely to $\Goal$.
  We conclude that, unlike the finitely branching case, we have
  \[
  0 = \lim_n \inf_\sigma \Prob^{\sigma}_{\calM,s_0}(\F[\le n] \Goal) < \inf_\sigma \lim_n \Prob^{\sigma}_{\calM,s_0}(\F[\le n] \Goal) = 1\enspace.
  \]
  Using Lemma~\ref{lemma:HnRn}, we also have that $1 = \Prob^{\inf}_{\calM,s_0}(\F \Goal) \le \lim_n p_n^{\inf, +}$.
  Hence, the scheme $\ApproxScheme_1^{\inf}$ does not converge on that particular MDP.
\end{example}

\begin{figure}
  \centering
    \begin{tikzpicture}
      \tikzstyle{j0}=[draw,text centered,rounded corners=2pt]
      \path (0,0) node[j0] (s0) {$s_0$};

      \path (2,1.5) node[j0] (goal) {$\Goal$};
      \path (s0) edge[-latex'] node[pos=.5,above right,sloped] {$\alpha_1$} (goal);

      \path (2,.75) node[j0] (s11) {$s_{21}$};
      \path (4,.75) node[j0] (s12) {$\Goal$};
      \path (s0) edge[-latex'] node[pos=.5,above right,sloped] {$\alpha_2$} (s11);
      \path (s11) edge[-latex'] (s12);

      \path (2,0) node[j0] (s31) {$s_{31}$};
      \path (4,0) node[j0] (s32) {$s_{32}$};
      \path (6,0) node[j0] (s33) {$\Goal$};
      \path (s0) edge[-latex'] node[pos=.5,above right,sloped] {$\alpha_3$} (s31);
      \path (s31) edge[-latex'] (s32);
      \path (s32) edge[-latex'] (s33);

      \path (2,-.75) node[j0] (s41) {$s_{41}$};
      \path (4,-.75) node[j0] (s42) {$s_{42}$};
      \path (6,-.75) node[j0] (s43) {$s_{43}$};
      \path (8,-.75) node[j0] (s44) {$\Goal$};
      \path (s0) edge[-latex'] node[pos=.5,above right,sloped] {$\alpha_4$} (s41);
      \path (s41) edge[-latex'] (s42);
      \path (s42) edge[-latex'] (s43);
      \path (s43) edge[-latex'] (s44);

      \path (2,-1.4) node[] (dots) {$\vdots$};
    \end{tikzpicture}
  \caption{An infinitely branching MDP $\calM$ such that $0 = \lim_n \inf_\sigma \Prob^{\sigma}_{\calM,s_0}(\F[\le n] \Goal) < \inf_\sigma \lim_n \Prob^{\sigma}_{\calM,s_0}(\F[\le n] \Goal) = 1$.}
  \label{fig:schemeUseless}
\end{figure}

When $\opt = \sup$, $\ApproxScheme_1^{\opt}$ may not converge, even on finite MDPs.
Consider the three-state MDP $\calM^{\mathtt{R}}$ from Example~\ref{ex:distinguishingAvoidSets}: we have that for every $n\ge 1$, $p_n^{\sup,-} = \frac 1 2$ and $p_n^{\sup,+} = 1$.
Hence, the scheme does not terminate when $\epsilon < \frac{1}{2}$.

\subsection{Sliced MDP and second approximation scheme}
\label{subsec:sliced}
To overcome the above-mentioned shortcoming of
$\ApproxScheme_1^{\sup}$, we propose a refined
approximation scheme.
Intuitively, instead of unfolding the MDP up to a fixed depth, as
implicitly done in $\ApproxScheme_1^{\opt}$, we consider slices of
the MDP consisting of the restrictions to all states that are
reachable from $s_0$ within a fixed number of steps. Doing so, the
convergence on finite MDPs is ensured.

Let $\calM = (S,\Act,\prob)$ be an MDP, $s_0\in S$ be an initial state, $\opt \in \{\inf,\sup\}$, and $\calM^{\opt}$ be as defined in Section~\ref{subsec:collapse}.
For every $n \in \nats$, we define the \emph{sliced MDP} $\calM^{\opt}_n$ as the restriction of $\calM^{\opt}$ to states that can
be reached within~$n$~steps from~$s_0$.
This construction is illustrated in Figure~\ref{fig:sliced-mdp}.

\newcommand{\Reachable}[1]{\mathsf{Reach}^{\le #1}_{s_0}}
For $n \ge 0$, let $\Reachable{n}$ be the set of states reachable from $s_0$ with a positive probability in at most $n$ steps. Formally, $\Reachable{0} = \{s_0\}$ and for $n \ge 0$, \[\Reachable{n+1} = \Reachable{n} \cup \{s' \in S^{\opt} \mid \exists s \in \Reachable{n}, \exists a \in \Act, \prob^{\opt}(s,a,s')>0\}\enspace.\]

For $n \ge 0$, the sliced MDP
$\calM^{\opt}_n = (S_n^{\opt},\Act,\prob^{\opt}_n)$ is defined
as follows:
\begin{itemize}
\item $S^{\opt}_n = \Reachable{n} \cup \{s_\bot^n\}$;
\item for all $s,s' \in \Reachable{n}$, for all $a
  \in \Act$, $\prob^{\opt}_n(s,a,s') =
  \prob^{\opt}(s,a,s')$;
\item for all $s \in \Reachable{n}$, for all $a
\in \Act$, $\prob^{\opt}_n(s,a,s^n_\bot) = 
  \sum_{s' \notin \Reachable{n}} \prob^\opt(s,a,s')$;
\item for all $a\in\Act$,
  $\prob^{\opt}_n(s^n_\bot,a,s^n_\bot) = 1$.
\end{itemize}

The state spaces of $\calM^\opt$ and $\calM^\opt_{n}$ coincide on
$\Reachable{n}$, and all transitions going out of
$\Reachable{n}$ in $\calM^\opt$ are directed to $s_\bot^n$ in
$\calM^\opt_{n}$.
Any path in $\calM^{\opt}$ induces a unique path in
$\calM^\opt_n$ which either stays in the common state space
$\Reachable{n}$ or reaches $s_\bot^n$. Moreover,
any path in $\calM^\opt_n$ that reaches $\Goal$ corresponds to a path
in $\calM^\opt$ that also reaches $\Goal$. In the sequel, we use
transparently the correspondence between paths in $\calM^{\opt}$ that only
visit states reachable within~$n$~steps, and paths in $\calM^\opt_n$
that avoid $s_\bot^n$.
Observe that the sliced MDPs of finitely branching MDPs are all finite.

\begin{figure}[tbp]
  \centering
        \begin{tikzpicture}[scale=.85]
    \tikzstyle{j0}=[draw,text centered,rounded corners=2pt]

    \path (0,0) node[j0] (s0) {$s_0$};
    \path (1.5,0) node (dots) {$\cdots$};
    \path (3,0) node[j0] (s) {$s$};

    \path (1,0) node (s0r) {};
    \path (2,0) node (sl) {};
    
    \path (4,1) node (sa1) {};
    \path (4,.5) node (sa2) {};

    \path (4,-1) node (sb1) {};
    \path (2.2,-.9) node (sb2) {};
    
    \path (5,1) node (tag) {$\calM^{\opt}$};

    \path (s0) edge[-latex'] (s0r);
    \path (sl) edge[-latex'] (s);
    
     \path (s) edge[-latex'] node[pos=.75,left=1pt]
     {$\alpha,1{-}p$} (sa1);
     \path (s) edge[-latex'] node[pos=.75,below right,xshift=-3pt]
     {$\alpha,p$} (sa2);

     \path (s) edge[-latex'] node[pos=.5,right=2pt,yshift=1pt]
     {$\beta,1{-}q$} (sb1);
     \path (s) edge[-latex'] node[pos=.5,left=2pt]
     {$\beta,q$} (sb2);
    
    \draw [densely dashed](-.5,1.5) -- (5.5,1.5) -- (5.5,-1.5) -- (-.5,-1.5) --
    cycle;
    \draw[dotted] (.5,1.5) -- (.5,-1.5);

    \draw[dotted] (2.6,1.5) -- (2.6,-1.5);
    \draw[dotted] (3.4,1.5) -- (3.4,-1.5); 

    \path (0,1.75) node {$0$}; 
    \path (3,1.75) node {$n$};
    \path (1.5,1.75) node {$\cdots$};

  \end{tikzpicture}
  \quad\quad\quad
          \begin{tikzpicture}[scale=.85]
    \tikzstyle{j0}=[draw,text centered,rounded corners=2pt]

    \path (0,0) node[j0] (s0) {$s_0$};
    \path (1.5,0) node (dots) {$\cdots$};
    \path (3,0) node[j0] (s) {$s$};
    \path (5,0) node[j0] (sinkn) {$s^n_\bot$};

    \path (1,0) node (s0r) {};
    \path (2,0) node (sl) {};
    
    \path (4,1) node (sa1) {};
    \path (4,.5) node (sa2) {};

    \path (4,-1) node (sb1) {};
    \path (2.2,-.9) node (sb2) {};
    \path (5,1) node (tag) {$\calM^{\opt}_n$};
    
    \path (s0) edge[-latex'] (s0r);
    \path (sl) edge[-latex'] (s);
    
     \path (s) edge[-latex',bend right] node[pos=.58,below]
     {$\beta,1{-}q$} (sinkn);
     \path (s) edge[-latex',bend left] node[pos=.58,above]
     {$\alpha,1$} (sinkn);
     \path (s) edge[-latex'] node[pos=.5,left=2pt]
     {$\beta,q$} (sb2);
    
    \draw [densely dashed] (-.5,1.5) -- (3.4,1.5) -- (3.4,-1.5) -- (-.5,-1.5) --
    cycle;
    \draw[dotted] (.5,1.5) -- (.5,-1.5);

    \draw[dotted] (2.6,1.5) -- (2.6,-1.5);
    \draw[dotted] (3.4,1.5) -- (3.4,-1.5);
  \end{tikzpicture}
\caption{Construction of the sliced MDP
  $\calM^{\opt}_n$ (right) from $\calM^{\opt}$ (left).}
\label{fig:sliced-mdp}
\end{figure}

We use events on the sliced MDP to find lower and upper bounds on the desired probability $p = \Prob^{\opt}_{\calM^{\opt},s_0}(\F \Goal)$.
A lower bound can be obtained through $\Prob^{\opt}_{\calM^{\opt}_n,s_0}(\F \Goal)$: reaching $\Goal$ in $\calM^{\opt}_n$ (only through states in $\Reachable{n}$) implies reaching $\Goal$ in $\calM^{\opt}$.
An upper bound is given by $\Prob^{\opt}_{\calM^{\opt}_n,s_0}(\F (\Goal \vee s_\bot^n))$: a path that reaches~$\Goal$ in $\calM^{\opt}$ would either reach $\Goal$ or $s_\bot^n$ in $\calM^{\opt}_n$.
We state these bounds formally, along with relations with the sequences~$(p_n^{\opt, -})_n$ and $(p_n^{\opt, +})_n$ from $\ApproxScheme_1^{\opt}$, in the following lemma (proved in Appendix~\ref{app:constructions}).

\begin{restatable}{lemma}{slices}
  \label{lemma:slices}
  The sliced MDP $\calM^\opt_n$ enjoys the following inequalities:
  \begin{enumerate}
  \item $\Prob^{\opt}_{\calM_n^{\opt},s_0}(\F \Goal) \le
    \Prob^{\opt}_{\calM^\opt,s_0}(\F \Goal) \le
    \Prob^{\opt}_{\calM^{\opt}_n,s_0}(\F (\Goal \vee  s_\bot^n))$, \label{item:slicesBounds}
  \item $\Prob^{\opt}_{\calM^{\opt}_n,s_0}(\F (\Goal \vee s_\bot^n)) \le
    \Prob^{\opt}_{\calM^{\opt}_n,s_0}(\F \Goal) +
    \Prob^{\sup}_{\calM^{\opt}_n,s_0} (\F s_\bot^n)$, \label{item:optsup}
  \item $p_n^{\opt, -} = \Prob^{\opt}_{\calM^\opt,s_0}(\F[\le n] \Goal)  \le
    \Prob^{\opt}_{\calM_n^{\opt},s_0}(\F \Goal)$, \label{item:pq-}
  \item $\Prob^{\opt}_{\calM^{\opt}_n,s_0}(\F (\Goal \vee
  s_\bot^n)) \le \Prob^{\opt}_{\calM^\opt,s_0} (\F[\le n] \Goal \vee
  \G[\le n] (\neg \Goal \wedge \neg \bad^\opt)) = p_n^{\opt, +}$. \label{item:pq+}
  \end{enumerate}
\end{restatable}

Thanks to Lemma~\ref{lemma:slices} (item~\ref{item:slicesBounds}), it is natural to define an
approximation scheme with $\Prob^{\opt}_{\calM_n^{\opt},s_0}(\F \Goal)$ as
a lower bound, and
$\Prob^{\opt}_{\calM^{\opt}_n,s_0}(\F (\Goal \vee  s_\bot^n))$ as an upper
bound. It is formalised in Scheme~\ref{approx_scheme2}.

\begin{algorithm}[H]
  \SetKwData{Left}{left}\SetKwData{This}{this}\SetKwData{Up}{up}
  \SetKwFunction{Union}{Union}\SetKwFunction{FindCompress}{FindCompress}
  \SetKwInOut{Input}{Input}\SetKwInOut{Output}{Output} \Input{An MDP
  $\calM$, $s_0 \in S$, $\Goal \in S$, and $\epsilon \in (0,1)$.} \Output{A value
    $v \in [0,1]$.} \BlankLine
  $n:=0$\;
  \Repeat{$|q_n^{\opt,+} - q_n^{\opt,-}| \leq \epsilon$}{  $n:=n+1$;\\
    $q_n^{\opt,-} := \Prob^{\opt}_{\calM^{\opt}_n,s_0}(\F \Goal)$;\\
    $q_n^{\opt,+} := \Prob^{\opt}_{\calM^{\opt}_n,s_0}(\F (\Goal \vee s_\bot^n))$;}
  \KwRet{$q_n^{\opt,-}$}
  \caption{$\ApproxScheme_2^{\opt}$}\label{approx_scheme2}
\end{algorithm}

Through Lemma~\ref{lemma:slices} (items~\ref{item:pq-} and~\ref{item:pq+}), we learn that for
every $n$, $p_n^{\opt,-} \leq q_n^{\opt,-}$ and
$q_n^{\opt,+} \leq p_n^{\opt,+}$. Thus, $\ApproxScheme_2^{\opt}$ is a
refinement of $\ApproxScheme_1^{\opt}$, which we can state as follows.

\begin{theorem}
  \label{theo:refinement}
  Let $\calM = (S,\Act,\prob)$ be an MDP, $s_0 \in S$ be an initial
  state, and $\Goal \in S$ be a target state. Assume that
  $\ApproxScheme_1^{\opt}$ provides a converging approximation scheme
  for $\Prob^{\opt}_{\calM,s_0}(\F \Goal)$. Then so does
  $\ApproxScheme_2^{\opt}$.
\end{theorem}

We give below a criterion for ensuring
that $\ApproxScheme_2^{\opt}$ is an approximation scheme.
It refines Theorem~\ref{theo:generic-correctness+termination}, as we have
  $\Prob^{\opt}_{\calM^{\opt}_n,s_0}(\F s_\bot^n) \le
  \Prob^{\opt}_{\calM^{\opt},s_0}(H_n^\opt)$: indeed, any path in $\calM_n^{\opt}$ that reaches $s_\bot^n$ (which takes at least $n + 1$ steps) corresponds to a path that reaches neither $\Goal$ nor $\bad$ within $n$ steps in $\calM^{\opt}$.

\begin{theorem}
  \label{prop:generic2-correctness+termination}
  \label{theo:generic2-correctness+termination}
  Let $\calM = (S,\Act,\prob)$ be an MDP, $s_0 \in S$ be an initial
  state, and $\Goal \in S$ be a target state. Assume that
  $\lim_{n \to \infty} \Prob^{\sup}_{\calM_n^{\opt},s_0}(\F s_\bot^n)
  = 0$. Then, $\ApproxScheme_2^{\opt}$ provides a converging
  approximation scheme for $\Prob^{\opt}_{\calM,s_0}(\F \Goal)$.
 \end{theorem}
 
 \begin{proof}
  The sequences $(q_n^{\opt,-} )_n$  and $(q_n^{\opt,+} )_n$ are respectively
  non-decreasing and non-increasing.  When they converge to the
  same limit, $\ApproxScheme_2^{\opt}$ terminates for all $\epsilon > 0$. Moreover, thanks to
  Lemma~\ref{lemma:slices} (item~\ref{item:slicesBounds}), for every $n \in \nats$,
  $q_n^{\opt,-} \le \Prob^{\opt}_{\calM^\opt,s_0}(\F \Goal) \le
  q_n^{\opt,+} $ so that upon termination, $\ApproxScheme_2^{\opt}$
  returns an $\epsilon$-approximation of
  $\Prob^{\opt}_{\calM^{\opt},s_0}(\F \Goal)$.
  This is an $\epsilon$-approximation of $\Prob^{\opt}_{\calM,s_0}(\F \Goal)$ by Lemma~\ref{lemma:Mopt}.

  Under the assumption that
  $\lim_{n \to \infty}\Prob^{\sup}_{\calM^{\opt}_n,s_0} (\F
  s_\bot^n) = 0$, item~\ref{item:optsup} of Lemma~\ref{lemma:slices} implies
  that
  $\lim_{n \to \infty} \Prob^{\opt}_{\calM_n^{\opt},s_0}(\F (\Goal
  \vee s_\bot^n)) = \lim_{n \to \infty}
  \Prob^{\opt}_{\calM_n^{\opt},s_0}(\F \Goal)$. We obtain that the two
  sequences $(q_n^{\opt,-} )_n$ and $(q_n^{\opt,+} )_n$ converge
  towards $\Prob^{\opt}_{\calM^\opt,s_0}(\F \Goal)$, and
  $\ApproxScheme_2^{\opt}$ converges.
  \qed
 \end{proof}

\begin{remark}[The case of finite MDPs]
   $\ApproxScheme_2^{\opt}$ converges for
  finite MDPs (and stops at the latest after a number of iterations
  equal to the number of reachable states). In contrast, recall that
  approximating the supremum probability of a reachability objective with
  $\ApproxScheme_1^{\sup}$ may not converge on some finite MDPs (see
  Example~\ref{ex:distinguishingDecisiveness} and Proposition~\ref{prop:endComponents}).
\end{remark}

\section{When do these schemes converge?}
\label{sec:convergingSchemes}

In this section, we give criteria related to decisiveness
that ensure convergence of the approximation schemes. We
start with criteria that ensure convergence of
$\ApproxScheme_1^{\opt}$ (hence of $\ApproxScheme_2^{\opt}$ by
Theorem~\ref{theo:refinement}).
We then show that, for finitely branching MDPs, the convergence of $\ApproxScheme_2^{\inf}$
implies the convergence of $\ApproxScheme_1^{\inf}$. Finally, we give conditions on the MDPs that ensure
the convergence of $\ApproxScheme_2^{\sup}$ (but not necessarily $\ApproxScheme_1^{\sup}$).
Missing proofs for this section are deferred to
Appendix~\ref{app:convergence}.

\subsection{Convergence of $\ApproxScheme_1^{\opt}$}
\label{subsec:scheme1}

We give conditions related to decisiveness which ensure the
convergence of the approximation schemes (recall that the convergence
of $\ApproxScheme_1^{\opt}$ implies the convergence of
$\ApproxScheme_2^{\opt}$ by Theorem~\ref{theo:refinement}).

\begin{theorem}
  \label{theo:approx-opt}
  Let $\calM = (S,\Act,\prob)$ be an MDP, $s_0 \in S$ be an initial state,
  and~$\target$ be a target state. Let $\opt \in \{\inf,\sup\}$. Assume
  that $\calM$ is finitely action-branching and $\opt$-decisive
  w.r.t.\ $\target$ from $s_0$. Then $\ApproxScheme_1^{\opt}$ converges on $\calM$ from~$s_0$.
\end{theorem}

To highlight the role of the decisiveness hypotheses in Theorem~\ref{theo:approx-opt}, we show on some examples that without decisiveness, the approximation schemes may not converge.
We first show that for some non-$\inf$-decisive MDPs, the approximation schemes do not converge. Consider indeed
the MDP~$\calM$ from Figure~\ref{fig:ex-mdp}, which has
$\Avoid^{\inf}_{\calM}(\Goal) = \{\bad\}$, so that
$\calM^{\inf} = \calM$. In case $p > \frac 1 2$, $\calM$ is not
$\inf$-decisive from $s_1$ w.r.t.\ $\Goal$ since the pure and positional
scheduler $\sigma$ that always picks action $\alpha$ has a positive
probability, say $\lambda_p$, to never reach~$\Goal$ nor~$\bad$. For every $n \ge 1$,
\[p_n^{\inf, +} = \Prob^{\inf}_{\calM, s_1}(\F[\le n] \Goal \lor \G[\le n] (\neg \Goal \wedge \neg \bad)) = 1 - \Prob^{\sup}_{\calM, s_1}(\F[\le n]\bad) = q\enspace.\]
This is achieved by choosing $\beta$ straight away; any other scheduler runs the risk of reaching $\Goal$.
On the other hand, $p_n^{\inf, -} \le \Prob^{\inf}_{\calM, s_1}(\F \Goal) \le 1 - \lambda_p$ (which is the value obtained by the scheduler always choosing $\alpha$).
Hence, by picking~$q$ and~$p$ such that $q > 1 - \lambda_p$, $\ApproxScheme_1^{\inf}$ does not converge on $\calM$ from $s_1$.

Similar arguments show that $\ApproxScheme_2^{\inf}$ does not converge on $\calM$ from~$s_1$.
First, $q_n^{\inf, +} = \Prob^{\inf}_{\calM^{\inf}_n, s_1}(\F (\Goal \lor s_\bot^n)) = q$ --- this is achieved by choosing~$\beta$ straight away; any other scheduler runs the risk of reaching $\Goal$ or $s_\bot^n$.
Second, $q_n^{\opt,-} \le \Prob^{\inf}_{\calM, s_1}(\F \Goal) \le 1 - \lambda_p$.  Thus, $\ApproxScheme_2^{\inf}$ does not converge either if $q > 1 - \lambda_p$.

Observe also that $\ApproxScheme_2^{\sup}$ (and thus $\ApproxScheme_1^{\sup}$) do not converge on the MDP $\calM$ from Figure~\ref{fig:ex-mdp} from $s_1$.
This MDP is not $\sup$-decisive (as it is not $\inf$-decisive) w.r.t.\ $\Goal$ from $s_1$.
We have that for every $n \in \nats$, $q_n^{\sup,+} = 1$ (achieved by only choosing $\alpha$), and yet $\Prob^{\sup}_{\calM,s_1}(\F \Goal) < 1$.

Finally, the finitely action-branching hypothesis is also critical.
Recall that for the infinitely branching MDP in Example~\ref{ex:schemeUseless}, the $\ApproxScheme^{\inf}_1$ does not converge from $s_0$.
Yet, this MDP is $\inf$-decisive w.r.t.\ $\Goal$ from $s_0$.

Despite the differences between the two approximation schemes, we have that for finitely branching MDPs, the convergence of $\ApproxScheme_2^{\inf}$ implies the convergence of $\ApproxScheme_1^{\inf}$.

\begin{restatable}{theorem}{approxSchemeTwoImpliesOne} \label{theo:approxSchemeTwoImpliesOne}
  Let $\calM = (S,\Act,\prob)$ be a finitely branching MDP, $s_0 \in S$ be an initial state,
  and~$\target$ be a target state. If $\ApproxScheme^{\inf}_2$ converges on $\calM$ from~$s_0$, then $\ApproxScheme^{\inf}_1$ converges on $\calM$ from
  $s_0$.
\end{restatable}

\subsection{Convergence of $\ApproxScheme^{\opt}_2$} \label{subsec:nonFleeing}

By applying the result and the discussion of
Section~\ref{subsec:scheme1}, we already know that
$\ApproxScheme^{\opt}_2$ converges under the conditions of
Theorem~\ref{theo:approx-opt}, that is, when the MDP is finitely
action-branching and $\opt$-decisive. The
$\sup$-decisiveness property is rather restrictive and is not
satisfied by finite MDPs, while $\ApproxScheme^{\sup}_2$ obviously
converges on finite MDPs. We therefore propose alternative
conditions that ensure the convergence of
$\ApproxScheme^{\sup}_2$.

\begin{definition}
  Let $\calM = (S,\Act,\prob)$ be an MDP, $\Goal \in S$ be a target
  state, and $s \in S$ be an initial state. The MDP $\calM$ is
  \emph{non-fleeing w.r.t.\ $\Goal$} whenever for every
  $\sigma \in \SchedPurePos(\calM)$,
  \[
    \Prob^\sigma_{\calM^{\sup},s_0} \left(\mathsf{div} \cap \F
      \Avoid^{\sigma}_{\calM^{\sup}}\left(\Goal\right) \right) =0
  \]
  where $\mathsf{div}$ is the event
  $\bigcap_{n \in \mathbb{N}} \left(\F \left(S^{\sup}_{n+1} \setminus
      S^{\sup}_n\right)\right)$.
\end{definition}

We explain the intuition of that notion through its negation:
  being fleeing corresponds to the possibility (in a probabilistic
  sense) to fly away from the origin of the MDP (and in particular
  never reach the target) --- and even reach the avoid set of the
  current scheduler --- in such a way that at any point, there exists a
  deviating scheduler that would reach the target (otherwise it would
  hit the avoid set summarised as $\bad^{\sup}$ in
  $\calM^{\sup}$).

\begin{theorem} \label{theo:nonFleeing}
  Let $\calM = (S,\Act,\prob)$ be an MDP, $s_0 \in S$ be an initial state,
  and~$\target$ be a target state.  Assume that $\calM$ is finitely
  branching, univ-decisive w.r.t.\ $\target$ from $s_0$, and
  non-fleeing.
  Then, $\ApproxScheme^{\opt}_2$ converges on $\calM$ from $s_0$.
\end{theorem}

The convergence condition in this theorem is incomparable to the one in Theorem~\ref{theo:approx-opt}: on the one hand, $\sup$-decisiveness implies univ-decisiveness and non-fleeingness, so this condition is less restrictive; on the other hand, this theorem deals with finitely branching MDPs, as opposed to the more general finitely action-branching MDPs of Theorem~\ref{theo:approx-opt}.
To prove this theorem, thanks to Theorem~\ref{theo:generic2-correctness+termination}, it suffices to show the following (proof in Appendix~\ref{app:convergence}).

\begin{restatable}{lemma}{lemmaNonFleeing} \label{lem:lemmaNonFleeing}
  If $\calM$ is finitely branching, univ-decisive, and non-fleeing, then
  \[
    \lim_{n \to \infty} \Prob^{\sup}_{\calM_n^{\sup},s_0} \left(\F
      s_\bot^n\right) = 0 \enspace.
  \]
\end{restatable}

\section{Applications}
\label{sec:applications}

We discuss the instantiation of the above approximation schemes into
approximation \emph{algorithms} for two concrete classes of systems:
\emph{non-deterministic and probabilistic lossy channel systems}
(NPLCSs) and \emph{partially observable MDPs} (POMDPs).
Although theses models have distinct sources of randomness
and infiniteness, they both 
induce countably infinite MDPs, where states are ``configurations'' of
the system (control states and channel contents for NPLCSs, rational
beliefs for POMDPs).
In each case, we show that the induced MDPs (or
small modifications thereof) satisfy a kind of decisiveness, which
allows to use approximation schemes.  We then show how to effectively
compute approximations of the infimum reachability
probabilities.

\subsection{Lossy channel systems}
\label{subsec:nplcs}

In our first application, we consider the case where
MDPs are induced by a probabilistic variant of \emph{lossy channel
  systems}. \emph{Non-deterministic and probabilistic lossy channel
  systems} build on channel systems, incorporating probabilistic
message losses and allowing non-deterministic choices between possible
read/write actions~\cite{BS03,BBS-acmtocl07}.

\paragraph{Lossy channel systems and induced MDP semantics.}
A \emph{channel system} is a tuple $\calS= (Q,\calC,\Mess,\ActLCS,\Delta)$
consisting of a finite set $Q$ of \emph{control states}, a finite set
$\calC$ of \emph{channels}, a finite \emph{message alphabet} $\Mess$,
a finite set $\ActLCS$ of \emph{silent action labels}, and a finite
set $\Delta$ of \emph{transition rules}. Each transition rule has the
form $q \step{\op} p$ where $\op$ is an \emph{operation} of the form
\begin{itemize}
\item $c!m$ for $c \in \calC$ and $m \in \Mess$, representing the
  sending of message $m$ along channel~$c$;
\item $c?m$ for $c \in \calC$ and $m \in \Mess$, representing the
  reception of message $m$ from channel~$c$;
\item $\ell \in \ActLCS$, representing an internal action labeled
  with $\ell$ with no corresponding sending/reception.
\end{itemize}
Messages are stored in FIFO queues, and the contents of the queues are
naturally represented by finite words over $\Mess$. A
\emph{configuration} of a channel system $\calS$ is a pair
$(q,\cc) \in Q \times (\Mess^*)^\calC$ consisting of a control state
and of words describing each channel's contents. A transition rule
$\delta = (q,\op,q')$ is \emph{enabled} in a configuration $(p,\cc)$
if $p=q$ and one of the following conditions applies: $\op = c?m$ and
$\cc(c)=m v$ with $v\in \Mess^*$, or $\op = c!m$, or
$\op \in \ActLCS$. If so, firing $\delta$ from $(q,\cc)$ yields the
configuration $\delta(q,\cc) = (q',\cc')$ where, if $\op = c?m$ then
$\cc'(c) = v$ and for every $c' \neq c$, $\cc'(c') = \cc(c')$ (message
$m$ is read from channel $c$), if $\op = c!m$ then
$\cc'(c) = \cc(c) m$ and for every $c'\neq c$, $\cc'(c') = \cc(c')$
(message $m$ is written to channel $c$), and if $\op=\ell \in \ActLCS$
then for every $c \in \calC$, $\cc'(c) = \cc(c)$ (no operation is
performed on the channels contents).

A \emph{non-deterministic and probabilistic lossy channel system}
(NPLCS) is a pair $\calN = (\calS,\lambda)$ consisting of a channel
system $\calS$ and a loss rate $\lambda \in (0,1)$. Its semantics is
the MDP $\calM[\calN] = (S,\Act,\prob)$ where
\begin{itemize}
\item $S= Q \times (\Mess^*)^\calC$: states are configurations of
  $\calS$;
\item $\Act = \Delta$: actions are transition rules of $\calS$;
\item the probabilistic transition function $\prob$ is defined as
  follows 
  \[
    \prob((q,\cc),\delta,(q',\cc')) = 
    \begin{cases}
      \lambda^{|\ccv| - |\cc'|} (1-\lambda)^{|\cc'|} \binom{\ccv}{\cc'} & \textrm{ if } \delta(q,\cc) = (q',\ccv)\\
0  & \textrm{ in all other cases. }
\end{cases}
    \]
    where the combinatorial coefficient $\binom{\ccv}{\cc'}$ is the
    number of different embeddings of $\cc'$ in $\ccv$. When writing
    $\delta(q,\cc) = (q',\ccv)$, we implicitly assume that $\delta$ is
    enabled in $(q,\cc)$.
\end{itemize}
So defined, actions available from a configuration of an NPLCS
correspond to transition rules that are enabled in the underlying
channel system, and the successor configuration is obtained in two
steps: first the rule is applied (possibly modifying the channels
contents from $\cc$ to $\ccv$), and second each message is lost
independently with probability $\lambda$ (and kept with probability
$(1-\lambda)$) so that the resulting channels contents is $\cc'$.

In the sequel, when $\calS$ and $\lambda$ are clear from the context,
we may simply write $\calM$ for the MDP $\calM[\calS,\lambda]$.

Figure~\ref{fig:ex-cs} represents a simple example of a channel system
with a single channel (thus omitted in the action labels). Because of
the FIFO policy, the control state~$\Goal$ can only be reached from
the initial configuration $(q,\varepsilon)$ if messages are lost, for
instance along this execution where a message is lost in the first
step:
\[ (p,\varepsilon) \xrightarrow{! b} (q,\varepsilon) \xrightarrow{!a}
  (q,a) \xrightarrow{!a} (q,aa) \xrightarrow{?a} (p,a) \xrightarrow{!
    b} (q,ab) \xrightarrow{?a} (p,b) \xrightarrow{?b}
  (\Goal,\varepsilon)\enspace. \]

  \begin{figure}[htbp]
    \centering
      \begin{tikzpicture}[xscale=1]
    \tikzstyle{j0}=[draw,text centered,rounded corners=2pt]

    \path (0,0) node[j0] (goal) {$\Goal$};
    \path (3,0) node[j0] (p) {$p$};
    \path (6,0) node[j0] (q) {$q$};
    
    \path (p) edge[-latex'] node[pos=.5,above]
    {$?b$} (goal);

    \path (p) edge[-latex',bend left] node[pos=.5,above]
    {$!b$} (q);

    \path (q) edge[-latex',bend left] node[pos=.5,below]
    {$?a$} (p);

    \path (q) edge[-latex',loop right] node[pos=.5,right]
    {$!a$} (q);
    
  \end{tikzpicture}
  \caption{A simple example of a channel system (with a single FIFO
    channel).}
  \label{fig:ex-cs}
\end{figure}

\begin{figure}
    \centering
    \begin{tikzpicture}
      \tikzstyle{j0}=[draw,text centered,rounded corners=2pt]
      \tikzstyle{dot}=[draw,circle,fill=black]
      \draw ($(1,0)$) node[j0] (q0) {$p,\varepsilon$};

      \draw ($(q0)+(3,0)$) node[j0,align=left] (q1) {$q,\varepsilon$};
      \draw ($(q1)+(2.4,1.5)$) node[j0,align=left] (q2) {$q,b$};
      \draw ($(q1)+(3.2,0)$) node[j0,fill=gray!30] (q3) {$q,ba$};
      \draw ($(q3)+(0,-1.5)$) node[j0,fill=gray!30] (q4) {$q,aa$};
      \draw ($(q0)+(3,-1.5)$) node[j0] (q5) {$q,a$};

      \draw (q0) edge[-latex'] node[above] {$!b, .2$} (q1);
      \draw (q0) edge[-latex',bend left] node[above] {$!b, .8$} (q2);
      
      \draw[-latex'] (q1)  .. controls +(-30:1cm) and
      +(30:1cm) .. (q1) node[midway, right] {$!a, .2$};
      \draw (q1) edge[-latex',bend right] node[midway,left] {$!a, .8$} (q5);
      
      \draw (q5) edge[-latex',bend right] node[midway,right] {$!a, .04$} (q1);
      \draw (q5) edge[-latex'] node[above] {$!a, .64$} (q4);
      \draw[-latex'] (q5)  .. controls +(-60:1cm) and
      +(-120:1cm) .. (q5) node[midway, below] {$!a, .32$};

      \draw (q2) edge[-latex',bend right] node[midway,left] {$!a, .04$} (q1);
      \draw (q2) edge[-latex',bend left] node[midway,right] {$!a, .64$} (q3);
      \draw[-latex'] (q2)  .. controls +(60:1cm) and
      +(120:1cm) .. (q2) node[midway, above] {$!a, .16$};
      \draw (q2) edge[-latex',bend left] node[midway,right] {$!a, .16$} (q5);
      	\end{tikzpicture}
        \caption{An excerpt of MDP $\calM[\calN]$ induced by the NPLCS
          $\calN$ from Figure~\ref{fig:ex-cs} with $\lambda = .2$:
          actions and states beyond  the gray
          configurations are omitted.}
    \label{fig:nplcs-mdp}
\end{figure}

We first state (un)decidability of comparing optimum reachability
probabilities to qualitative threshold (0 or 1), and then use the
inf-decisiveness property to show infimum reachability probability can
be approximated in NPLCSs.

\paragraph{Qualitative problems.} We start with qualitative
reachability in NPLCSs. Missing proofs are provided in
Appendix~\ref{app:nplcs}.
\begin{restatable}{theorem}{decNPLCS}
  When $\Goal \subseteq Q$ is a set of control states, the following
  problems are decidable for NPLCSs:
  \begin{enumerate}
  \item $\Prob^{\inf}(\F \Goal) = 1$;
  \item $\Prob^{\sup} (\F \Goal) = 0$;
  \item $\Prob^{\inf} (\F \Goal) = 0$.
  \end{enumerate}
\end{restatable}

Yet, we establish the undecidability of the \emph{value-$1$} problem
for NPLCSs, which also contrasts with the fact that the existence of a
scheduler ensuring almost surely a reachability objective is decidable
for NPLCSs~\cite{BBS-acmtocl07}.
\begin{restatable}{theorem}{undec}
  \label{theo:undec}
  The problem whether $\Prob^{\sup}(\F \Goal) = 1$ is undecidable for
  NPLCSs.
\end{restatable}

\paragraph{Approximation of the infimum reachability probability.}
Iyer and Narasihma provided an approximation scheme for reachability
probabilities in probabilistic channel systems, whose semantics is
given by a countable Markov chain~\cite{IN97}. This result was then
generalised to all decisive Markov chains by Abdulla, Ben Henda, and
Mayr~\cite{ABM07}. Here we show that, as far as infimum reachability
probabilities are concerned, our approximation schemes can be used for
MDPs induced by NPLCSs, thus lifting the early result of~\cite{IN97}
from Markov chains to MDPs.

The key to prove the feasibility of approximating infimum reachability
probabilities for NPLCSs is their finite attractor property:
\begin{lemma}[\cite{BBS-acmtocl07}, Proposition 4.2]
  Let $\calN$ be an NPLCS. Then $\calM[\calN]$ has a finite attractor.
\end{lemma}
More precisely, the set of configurations with empty channels is a
finite attractor for $\calM[\calN]$.  We deduce by
Proposition~\ref{prop:finiteAttractors} that $\calM[\calN]$ is
univ-decisive, hence $\inf$-decisive, from $(q_0,\epsilon)$ w.r.t.\
any set $F$. The two approximation schemes $\ApproxScheme_1^{\inf}$
and $\ApproxScheme_2^{\inf}$ thus converge and are correct by
Theorems~\ref{theo:approx-opt} and~\ref{theo:refinement}.

\begin{theorem}\label{thm:nplcs-inf-decid}
  There exists an algorithm that, given an NPLCS $\calN$ with initial
  state $\initState$, goal set $\Goal \subseteq Q$ and a rational
  number $\varepsilon >0$, returns a value $\varepsilon$-close to
  $\Prob_{\calM[\calN], \initState}^{\inf}(\F \Goal)$.
\end{theorem}

It remains to discuss the effectiveness of the schemes. Assuming
finite action-branching, thanks to
Lemma~\ref{lemma:schedulerComplexity} (item~\ref{item:infOptPurePos}),
computing $\Avoid^{\inf}_\calM(\Goal)$ amounts to computing states
from which one can almost-surely avoid $\Goal$ under a pure and
positional scheduler, which amounts to computing states from which one
can (surely) avoid $\Goal$. The latter set can be effectively computed
as a fixed point~\cite{BBS-acmtocl07}.

\subsection{Partially observable MDPs} \label{subsec:pomdps}

We focus in this section on \emph{partially observable Markov decision processes}, abbreviated \emph{POMDPs}~\cite{KLC98,CCT16}.
Like MDPs, they exhibit both nondeterminism and probabilistic transitions; they are more general in that the scheduler making decisions does not know the current state of the system in general, but only receives a \emph{signal} at each step that gives \emph{partial information} about the current state.
All decisions must be based on the sequence of signals received (and not the states visited) up to some point.
Given such a sequence, a common way to represent the most accurate information about our current knowledge of the state of the system is through the probability distribution on the possible states, called a \emph{belief}.
Even though we consider POMDPs with finitely many states, actions, and signals, POMDPs are relevant in our framework as they each induce naturally an infinite MDP on the state space of beliefs.

Most natural quantitative problems in POMDPs are undecidable, already for simple reachability and safety (i.e., $\sup$ and $\inf$ reachability) objectives.
This undecidability stems from results on the less general model of \emph{probabilistic automata}~\cite{Paz71,MHC99,GO10,Fij17}.
Here are some examples of undecidable problems for probabilistic automata (and thus for POMDPs):
\begin{itemize}
    \item Given a probabilistic automaton and a threshold $\lambda \in (0, 1)$, decide whether there is a scheduler that ensures that a goal state is reached with probability at least $\lambda$~\cite{Paz71}.
    The same holds replacing ``reached'' by ``avoided''.
    \item Given a probabilistic automaton, decide whether the supremum probability of reaching a goal state is $1$~\cite{GO10}.
    \item Given $\epsilon > 0$ and a probabilistic automaton such that either $(i)$ there is a word accepted with probability at least $1 - \epsilon$ or $(ii)$ all words are accepted with probability at most $\epsilon$, decide which case holds~\cite{MHC99}.
\end{itemize}
The latter problem is especially relevant to our setting, as it implies that there is no approximation algorithm for the \emph{supremum} probability of reachability in POMDPs.
Hence, we will not be able to make use of $\ApproxScheme_1^{\sup}$ or $\ApproxScheme_2^{\sup}$ on general POMDPs.

Yet, none of these results imply that the \emph{infimum} reachability probability  (i.e., the supremum value of safety) cannot be approximated in POMDPs.
Using the $\inf$-decisiveness property and $\ApproxScheme_1^{\inf}$, we show that there exists such an algorithm.

\paragraph{POMDPs and induced MDP semantics.}
We first recall basic notions on partially observable MDPs (POMDPs).
\begin{definition}
    A \emph{partially observable MDP} is a tuple $\pomdp = \pomdpFull$ where $\states$ is a finite set of states, $\actions$ is a finite set of actions, $\signals$ is a finite set of signals, and $\transitions\colon \states \times \actions \times \signals \times \states \to [0, 1] \cap \bbQ$ is a transition function such that for all $\state \in \states$ and $a \in \actions$, $\sum_{\sig \in \signals} \sum_{\state' \in \states} \transitions(\state, \act, \sig, \state') = 1$.
\end{definition}

The main difference with the semantics of an MDP is that, in the case of POMDPs, the information of the current state is not known by schedulers in general; schedulers must base their decisions on the signals they receive (as well as the actions they have already selected).
To keep this section concise, we will only express the semantics of POMDPs through the equivalent formulation of \emph{belief MDPs}~\cite{KLC98} below.

Let $\pomdp = \pomdpFull$ be a POMDP.
We assume without loss of generality that there is a distinguished state $\goalPomdp\in\states$ and a distinguished signal $\goalSig\in\signals$ such that for all $\state\in\states$ and $\act\in\actions$, $\transitions(\state, \act, \sig, \goalPomdp) > 0$ implies $\sig = \goalSig$, and for all $\act\in\actions$, $\transitions(\goalPomdp, \act, \goalSig, \goalPomdp) = 1$.
In other words, when the state $\goalPomdp$ is reached, the scheduler is aware of it (through the observation of signal $\goalSig$) and cannot escape it.
We recall that we focus in this section on the \emph{infimum} probability of reachability, which means we try as much as possible \emph{not to} reach state~$\goalPomdp$.

We write $\beliefs = \{\belief\colon \states \to [0, 1] \cap \mathbb{Q} \mid \sum_{\state\in\states} \belief(\state) = 1\}$ for the set of distributions over $\states$ with rational values.
A \emph{belief (of $\pomdp$)} is a probability distribution $\belief\in\beliefs$.
The \emph{belief-update function} is the function $\beliefUpd\colon \beliefs \times \actions \times \signals \to \beliefs$ such that for all $(\belief, \act, \sig) \in \beliefs \times \actions \times \signals$,
\[
\beliefUpd(\belief, \act, \sig)(\state') = \frac{\sum_{\state \in \states} \belief(\state) \cdot \transitions(\state, \act, \sig, \state')}{\sum_{\state\in\states}(\belief(\state)\cdot \sum_{q''\in\states} \transitions(\state, \act, \sig, \state''))}\enspace.
\]
The belief $\beliefUpd(\belief, \act, \sig)$ corresponds to the new belief that the scheduler has after selecting action $\act$ and observing signal $\sig$ from belief $\belief$.
The \emph{support $\supp(\belief)$} of a belief $\belief$ is the set $\{\state\in\states\mid\belief(\state) > 0\}$.
The set of all belief supports then corresponds to the set $\powerSetNonEmpty$.
In what follows, we denote beliefs (i.e., distributions) with font $\belief$, and belief supports (i.e., sets) with font $\beliefSupp$.

The \emph{belief MDP of $\pomdp$} is the infinite MDP $\mdp[\pomdp] = (\beliefs, \actions, \transitions_\pomdp)$ where $\beliefs$ is the set of beliefs, $\actions$ is the set of actions, and $\transitions_\pomdp\colon \beliefs \times \actions \times \beliefs \to [0, 1] \cap \bbQ$ is
\[
    \transitions_\pomdp(\belief, \act, \belief') = \sum_{\substack{\sig \in \signals\, \text{s.t.}\\ \beliefUpd(\belief, \act, \sig) = \belief'}} \sum_{\state, \state'\in\states} \belief(\state) \cdot \transitions(\state, \act, \sig, \state')\enspace.
\]
Given our assumptions about the state $\goalPomdp$ of $\pomdp$, we have that as soon as $\goalPomdp$ is reached in the POMDP, we reach the corresponding belief $\goalPomdp \mapsto 1$ in the belief~MDP.
We denote this belief by $\Goal$.

If $\initState\in\states$, we abusively write $\Prob_{\mdp[\pomdp], \initState}^{\inf}(\F \Goal)$ for the infimum probability of reaching $\Goal$ in $\mdp[\pomdp]$ starting from the belief $\initState \mapsto 1$ (i.e., we assimilate the notation $\initState$ to the belief $\initState \mapsto 1$).

\paragraph{Approximation of the infimum reachability probability.}
Our plan is to apply Theorem~\ref{theo:approx-opt} (and thus $\ApproxScheme_1^{\inf}$) to the infinite MDP $\mdp[\pomdp]$ to approximate the infimum probability of reaching the goal state in a POMDP $\pomdp$.
Observe first that $\mdp[\pomdp]$ is finitely action-branching as there are only finitely many actions in $\actions$ (and actually, even finitely branching as there are only finitely many signals in $\signals$, but this is not necessary to use Theorem~\ref{theo:approx-opt}).
We would therefore need some kind of $\inf$-decisiveness for $\mdp[\pomdp]$.
However, in general, $\mdp[\pomdp]$ is not $\inf$-decisive.

\begin{example} \label{ex:pomdpNotInf}
    Consider the POMDP $\pomdp$ in Figure~\ref{fig:pomdpNotInf}.
    It has a single action $\alpha$.
    Starting from $\state_0$, if $\state_1$ is reached, then only signal $\sig$ will ever be seen.
    Yet, through successive observations of $\sig$, the scheduler can never be sure to be in $\state_1$; there is a decreasing but positive probability that the current state is~$\state_2$.
    Formally, the belief $\belief_n$ after seeing the sequence of signals $\sig^n$ (with $n \ge 1$) is defined by $\belief_n(\state_1) = 1 - \frac{1}{2^n}$ and $\belief_n(\state_2) = \frac{1}{2^n}$.
    All these beliefs still have a positive probability to reach $\Goal$ (from $\state_2$), so none of them are in $\Avoid^{\inf}_{\mdp[\pomdp]}(\Goal)$.
    There is therefore a positive probability to stay in a region of $\mdp[\pomdp]$ that is neither $\Goal$ nor in $\Avoid^{\inf}_{\mdp[\pomdp]}(\Goal)$, which shows that $\mdp[\pomdp]$ is not $\inf$-decisive w.r.t.~$\Goal$ from~$\state_0$.
\end{example}

\begin{figure}
    \centering
    \begin{tikzpicture}
        \tikzstyle{j0}=[draw,text centered,rounded corners=2pt]
        \tikzstyle{dot}=[draw,circle,fill=black]
		\draw ($(1,0)$) node[j0] (q0) {$\state_0$};

        \draw ($(q0)+(1.5,0)$) node[dot] (q0dot) {};
		\draw ($(q0)+(2,1)$) node[j0] (q1) {$\state_1$};
		\draw ($(q0)+(2,-1)$) node[j0] (q2) {$\state_2$};
		\draw ($(q0)+(4.5,0)$) node[j0] (goal) {$\goalPomdp$};

		\draw (q0) edge[-latex'] node[above] {$\alpha$} (q0dot);
		\draw (q0dot) edge[-latex',bend right] node[right] {$\sig, \frac{1}{3}$} (q1);
		\draw (q0dot) edge[-latex',bend left] node[right] {$\sig, \frac{1}{3}$} (q2);
		\draw (q0dot) edge[-latex'] node[above right] {$\mathsf{done}, \frac{1}{3}$} (goal);
		\draw (q2) edge[-latex',bend left] node[left] {$\alpha$} (q0dot);
		\draw (q1) edge[-latex',loop left] node[left] {$\alpha, \sig, 1$} (q1);
	\end{tikzpicture}
    \caption{A POMDP $\pomdp$ such that $\mdp[\pomdp]$ is not $\inf$-decisive w.r.t.~$\Goal$ from~$\state_0$.}
    \label{fig:pomdpNotInf}
\end{figure}

Our proof scheme is as follows: even though $\mdp[\pomdp]$ is not $\inf$-decisive in general, we show that from every POMDP $\pomdp$ and every $\epsilon > 0$, we can modify $\mdp[\pomdp]$ slightly to obtain an infinite MDP $\mdp[\pomdp]^\epsilon$ such that:
\begin{itemize}
    \item $\mdp[\pomdp]^\epsilon$ is $\inf$-decisive (Lemma~\ref{lem:pomdpInfDecisive}),
    \item the infimum probability of reaching the goal state in $\mdp[\pomdp]^\epsilon$ is within $\epsilon$ of the infimum probability of reaching the goal state in $\mdp[\pomdp]$ (Lemma~\ref{lem:pomdpEpsilonClose}).
\end{itemize}
To obtain an approximation \emph{algorithm}, we then discuss how to compute effectively the sequences $(p_n^{\inf, -})_n$ and $(p_n^{\inf, +})_n$ in the infinite $\mdp[\pomdp]^\epsilon$ given the finite representation of $\pomdp$ (Theorem~\ref{thm:pomdpSafetyDecidable}).

Let $\pomdp = \pomdpFull$ be a POMDP, $\initState\in\states$ be an initial state, and $\epsilon > 0$.
We construct $\mdp[\pomdp]^\epsilon$ from $\mdp[\pomdp]$.

Observe that if a belief $\belief$ is such that there is a scheduler $\sigma$ such that $\Prob_{\mdp[\pomdp], \belief}^{\sigma}(\F \Goal) = 0$, then for all beliefs~$\belief'$ with $\supp(\belief') = \supp(\belief)$, we also have $\Prob_{\mdp[\pomdp], \belief'}^{\sigma}(\F \Goal) = 0$.
We define \[\ASbeliefSupps = \{\beliefSupp \in \powerSetNonEmpty \mid \exists \sigma, \forall \belief\ \text{s.t.}\ \supp(\belief) = \beliefSupp, \Prob_{\mdp[\pomdp], \belief}^{\sigma}(\F \Goal) = 0\}.\]

To build $\mdp[\pomdp]^\epsilon$, we merge some specific beliefs of $\mdp[\pomdp]$ into a single, new absorbing state~$\Bad^\epsilon$.
The beliefs that are merged are the beliefs $\belief$ such that
\[
    \exists \beliefSupp' \subseteq \supp(\belief)\ \text{s.t.}\ \beliefSupp' \in \ASbeliefSupps\ \text{and}\ \sum_{\state\in\beliefSupp'} \belief(\state) \ge 1 - \epsilon\enspace. 
\]
We call such a belief a \emph{$(1 - \epsilon)$-avoiding} belief.
Intuitively, such a belief is one such that, if the current state is in a specific subset $\beliefSupp'$ of the support (which occurs with probability $\ge 1 - \epsilon$), a scheduler can ensure that the goal state is never reached.

Formally, we define the state space of $\mdp[\pomdp]^\epsilon$ as
\[
    S^\epsilon = \{\Bad^\epsilon\} \cup \{\belief\in\beliefs \mid
    \text{$\belief$ is not $(1-\epsilon)$-avoiding}\}\enspace.
\]
The transitions are then kept the same as in $\mdp[\pomdp]$, except that the transitions to a $(1-\epsilon)$-avoiding belief are redirected to the absorbing state $\Bad^\epsilon$.

\begin{example}
    We build the MDP $\mdp[\pomdp]^\epsilon$ obtained from the POMDP $\pomdp$ in Example~\ref{ex:pomdpNotInf}, with $\epsilon = \frac{1}{8}$.
    It is shown in Figure~\ref{fig:pomdpNotInfEps}; we recall that state $\Goal$ in $\mdp[\pomdp]$ corresponds to the belief $\goalPomdp \mapsto 1$.

    Observe that the only belief support in $\ASbeliefSupps$ is $\{\state_1\}$.
    Hence, the beliefs that are $(1 - \epsilon)$-avoiding are the beliefs $\belief$ such that $\belief(\state_1) \ge \frac{7}{8}$.
    The MDP $\mdp[\pomdp]^\epsilon$ is even finite, and so is $\inf$-decisive.
\end{example}

\begin{figure}
    \centering
    \begin{tikzpicture}
        \tikzstyle{j0}=[draw,text centered,rounded corners=2pt]
        \tikzstyle{dot}=[draw,circle,fill=black]
		\draw ($(1,0)$) node[j0] (q0) {$\state_0 \mapsto 1$};

        \draw ($(q0)+(3,0)$) node[j0,align=left] (q1) {$\state_1 \mapsto \frac{1}{2}$\\$\state_2 \mapsto \frac{1}{2}$};
		\draw ($(q1)+(3,0)$) node[j0,align=left] (q2) {$\state_1 \mapsto \frac{3}{4}$\\$\state_2 \mapsto \frac{1}{4}$};
		\draw ($(q0)+(8,0)$) node[j0] (bad) {$\bad^\epsilon$};
		\draw ($(q0)+(3,-1.5)$) node[j0] (goal) {$\Goal$};

		\draw (q0) edge[-latex'] node[above] {$\alpha, \frac{2}{3}$} (q1);
		\draw (q1) edge[-latex'] node[above] {$\alpha, \frac{2}{3}$} (q2);
		\draw (q2) edge[-latex'] node[above] {$\alpha, \frac{2}{3}$} (bad);
		\draw (q0) edge[-latex',bend right] node[below left] {$\alpha, \frac{1}{3}$} (goal);
		\draw (q1) edge[-latex'] node[left] {$\alpha, \frac{1}{3}$} (goal);
		\draw (q2) edge[-latex',bend left] node[below right] {$\alpha, \frac{1}{3}$} (goal);
	\end{tikzpicture}
    \caption{The MDP $\mdp[\pomdp]^\epsilon$ obtained from $\pomdp$ in Example~\ref{ex:pomdpNotInf} with $\epsilon = \frac{1}{8}$.}
    \label{fig:pomdpNotInfEps}
\end{figure}

We can now state the aforementioned results leading to an approximation algorithm for the infimum probability of reachability in POMDPs (proofs in Appendix~\ref{app:pomdps}).

\begin{restatable}{lemma}{pomdpInfDecisive} \label{lem:pomdpInfDecisive}
    The infinite MDP $\mdp[\pomdp]^\epsilon$ is $\inf$-decisive.
\end{restatable}

\begin{restatable}{lemma}{pomdpEpsilonClose} \label{lem:pomdpEpsilonClose}
    We have that
\[
    \Prob_{\mdp[\pomdp]^\epsilon, \initState}^{\inf}(\F \Goal) \le \Prob_{\mdp[\pomdp], \initState}^{\inf}(\F \Goal) \le \Prob_{\mdp[\pomdp]^\epsilon, \initState}^{\inf}(\F \Goal) + \epsilon\enspace.
\]
\end{restatable}

\begin{theorem} \label{thm:pomdpSafetyDecidable}
    There exists an algorithm that, given any POMDP $\pomdp$ and rational number $\epsilon > 0$, returns a value $\epsilon$-close to $\Prob_{\mdp[\pomdp], \initState}^{\inf}(\F \Goal)$.
\end{theorem}
\begin{proof}
    We describe the algorithm.
    Let $\pomdp$ be a POMDP and $\epsilon > 0$ be rational.
    We consider the MDP $\mdp[\pomdp]^{\epsilon/2}$, which is $\inf$-decisive by Lemma~\ref{lem:pomdpInfDecisive}.
    As $\mdp[\pomdp]^{\epsilon/2}$ is finitely action-branching, approximation scheme $\ApproxScheme_1^{\inf}$ is converging (Theorem~\ref{theo:approx-opt}).

    It remains to argue that the sequences $(p_n^{\inf, -})_n$ and $(p_n^{\inf, +})_n$ appearing in $\ApproxScheme_1^{\inf}$ can be computed effectively.
    We first compute the set $\ASbeliefSupps$; this corresponds to multiple almost-sure safety problems on $\pomdp$, which are decidable~\cite{CCT16}.
    All beliefs up to a fixed depth can be computed exactly (they are all arrays of rational numbers), and since $\ASbeliefSupps$ was precomputed, we can decide whether a belief is $(1 - \frac{\epsilon}{2})$-avoiding (and thus whether we have reached $\Bad^{\epsilon/2}$ in~$\mdp[\pomdp]^{\epsilon/2}$).

    Hence, we have an effective algorithm that returns a value $v$ such that $\Prob^{\inf}_{\mdp[\pomdp]^{\epsilon/2}, \initState}(\F \Goal) - \frac{\epsilon}{2} \le v \le \Prob^{\inf}_{\mdp[\pomdp]^{\epsilon/2}, \initState}(\F \Goal) + \frac{\epsilon}{2}$.
    By Lemma~\ref{lem:pomdpEpsilonClose}, we have that $\Prob_{\mdp[\pomdp]^{\epsilon/2}, \initState}^{\inf}(\F \Goal) \le \Prob_{\mdp[\pomdp], \initState}^{\inf}(\F \Goal) \le \Prob_{\mdp[\pomdp]^{\epsilon/2}, \initState}^{\inf}(\F \Goal) + \frac{\epsilon}{2}$.
    Hence,
    \[
        \Prob_{\mdp[\pomdp], \initState}^{\inf}(\F \Goal) - \epsilon \le v \le \Prob_{\mdp[\pomdp], \initState}^{\inf}(\F \Goal) + \frac{\epsilon}{2}\enspace,
    \]
    which suffices for an approximation algorithm with precision $\epsilon$.
    \qed
\end{proof}

\section{Conclusion}

We extended the decisiveness notion from Markov chains to
Markov decision processes (MDPs) and demonstrated how to leverage this
property to derive approximation schemes for optimum reachability
probabilities (corresponding to maximising the probability of
reachability or safety objectives). The notion of $\inf$-decisiveness
appears to be of practical relevance, as we showed that it enables the
approximation of the infimum reachability probability in two important
classes of models: nondeterministic and probabilistic lossy channel
systems and partially observable MDPs. The stronger notion of
$\sup$-decisiveness, while not yielding here new decidability results
for specific MDP classes, provides valuable insights through its
connection to the \emph{stopping} notion and the convergence of value
iteration for finite MDPs.

Natural directions for future research include extending our framework
to richer objectives (e.g., repeated reachability) and exploring its
applicability to broader classes of models (e.g., probabilistic vector
addition systems with states).

\subsubsection*{Acknowledgments.}
Thomas Brihaye is partly supported by the Fonds de la Recherche Scientifique - FNRS under grant n\textsuperscript{$\circ$}~T.0027.21 and by the Belgian National Lottery.

\bibliographystyle{splncs04}
\bibliography{refs}

\newpage
\appendix

\noindent{\bfseries\Large Appendix}\medskip

\noindent In this appendix, we provide missing proofs that were left out of the main body of the paper.

\section{Omitted proofs of Section~\ref{sec:decisiveness}}
\label{app:decisiveness}

In this section, we prove the three criteria for the decisiveness of MDPs from Section~\ref{sec:decisiveness}.

\propFiniteAttractors*
\begin{proof}
  To show that $\calM$ is univ-decisive, we show that $\calM$ is $\sigma$-decisive w.r.t.\ $\Goal$ from every state and for every scheduler $\sigma\in\SchedPurePos(\calM)$.
  Let $\sigma\in\SchedPurePos(\calM)$.
  Consider the Markov chain $\calM^\sigma$ induced by $\sigma$ on $\calM$.
  In this Markov chain, the set $A$ is a finite attractor in the sense of~\cite{ABM07}.
  Moreover, $\Avoid^\sigma_{\calM}(\Goal)$ is the avoid set of $\calM^\sigma$ in the sense of~\cite{ABM07}.
  Hence, by~\cite[Lemma~3.4]{ABM07}, for all $s\in S$, $\Prob^\sigma_{\calM,s}\big(\F \Goal \vee \F
  \Avoid^\sigma_{\calM}(\Goal) \big) = 1$.
  \qed
\end{proof}

\propEndComponents*
\begin{proof}
  Assume first that all end components $(R, A)$ of $\calM$ are either such that $R = \{\Goal\}$ or $R \subseteq \Avoid_\calM^{\sup}(\Goal)$.
  Standard results on end components imply that all schedulers eventually reach the states of an end component almost surely~\cite[Section~3.3]{deAlfaroThesis}.
  Hence, for every scheduler $\sigma\in\SchedPurePos(\calM)$ and from every state $s\in S$, $\Prob^\sigma_{\calM,s}\big(\F \Goal \vee \F \Avoid_\calM^{\sup}(\Goal) \big) = 1$.
  So~$\calM$ is $\sup$-decisive.

  Assume now that there exists an end component $(R, A)$ such that we have neither $R = \{\Goal\}$ nor $R \subseteq \Avoid_\calM^{\sup}(\Goal)$.
  Note that as all states in $R$ are connected, no state in $R$ is part of $\Avoid_\calM^{\sup}(\Goal)$.
  Let $s$ be a state in $R$.
  Standard results on end components imply that there exists a pure and positional scheduler $\sigma$ that, from $s$, only visits states in~$R$ forever.
  Hence, $\Prob^{\sigma}_{\calM,s}\big(\F \Goal \vee \F \Avoid_\calM^{\sup}(\Goal) \big) = 0$.
  So $\calM$ is not $\sup$-decisive.
  \qed
\end{proof}

\propCoarse*
\begin{proof}
  Let $p > 0$ be such that from every state~$s$, for all schedulers $\sigma\in\SchedPurePos(\calM)$, $\Prob^\sigma_{\calM,s}(\F \Goal \lor \F \bad) \ge p$.
  By the continuity of probabilities, we deduce that for every state~$s$, for all schedulers $\sigma\in\SchedPurePos(\calM)$, there exists $n\in\bbN$ such that $\Prob^\sigma_{\calM,s}(\F^{\le n} \Goal \lor \F^{\le n} \bad) \ge \frac{p}{2}$.
  This implies that for every scheduler, the probability to avoid $\Goal$ and $\bad$ forever is upper bounded by $\lim_{k\to \infty} (1-\frac{p}{2})^k = 0$.
  Hence, $\Prob^\sigma_{\calM,s}(\F \Goal \lor \F \bad) = 1$.
  Since $\bad\in\Avoid_\calM^{\sup}(\Goal)$, we have that $\calM$ is $\sup$-decisive w.r.t.\ $\Goal$ from every state.
  \qed
\end{proof}

\section{Technical properties of step-bounded events}
\label{app:stepBounded}

  In our schemes, we often consider events (i.e., measurable sets of paths of an MDP) that are specified with a \emph{step bound}: for instance, $\F[\le n] \Goal$.
  We discuss here some technical properties of such simple events, which we call \emph{step-bounded}.
  We say that an event $E$ is \emph{step-bounded} if there exists $n\in\nats$ such that for all histories $h$ with $|h| \ge n$, $\Cyl(h) \subseteq E$ or $\Cyl(h) \cap E = \emptyset$; when this holds, we say that $E$ has \emph{step bound~$n$}.
  It means that after a finite number of steps, we can determine whether the event $E$ holds or not, no matter what happens afterwards.\footnote{\emph{Step-bounded} events correspond to the topological \emph{clopen} events in the case of finitely branching systems.}
  We say that a sequence $(E_n)_n$ of events is \emph{monotone} if it is non-increasing or non-decreasing for the inclusion relation.

  This section is devoted to showing that in a finitely action-branching MDP, for a monotone sequence of step-bounded events (with growing step bounds), the limit of the optimal probabilities is equal to the optimal probability of the limit event; formally, if $(E_n)_n$ is such a sequence, $\lim_{n\to\infty} \opt_{\sigma} \Prob_{\mdp,s_0}^\sigma(E_n) = \opt_{\sigma} \lim_{n\to\infty} \Prob_{\mdp,s_0}^\sigma(E_n)$.
  This will immediately imply Lemma~\ref{lem:schemeUseful}, and will be used in the proof of other statements.
  This section only relies on the notions and properties discussed in Section~\ref{sec:preliminaries}.
  The proof is carried out over the next three lemmas.
  We first show that pure schedulers suffice for step-bounded events.

  \begin{lemma} \label{lemma:stepBounded}
    Let $\calM$ be a finitely action-branching MDP and let $s_0$ be an initial state.
    Let $E$ be a step-bounded event.
    For all $\opt \in \{\inf,\sup\}$, there
    exists $\sigma \in \SchedPureMemoryful(\calM)$ s.t.
    $\Prob^{\sigma}_{\calM,s_0}(E) =
    \Prob^{\opt}_{\calM,s_0}(E)$.
  \end{lemma}
  \begin{proof}
    Let $n\in\nats$ be a step bound of event $E$.
    Consider the unfolding $\calM_{n, s_0}$ of $\calM$ up to depth~$n$ from $s_0$.
    Add two absorbing states $s_E$ and $s_{\lnot E}$; direct all histories~$h$ of length~$n$ to $s_E$ if $\Cyl(h) \subseteq E$ and to $s_{\lnot E}$ if $\Cyl(h) \cap E = \emptyset$.

    Notice that $\Prob^{\inf}_{\calM,s_0}(E) = \Prob^{\inf}_{\calM_{n, s_0},s_0}(\F s_E)$ and $\Prob^{\sup}_{\calM,s_0}(E) = \Prob^{\inf}_{\calM_{n, s_0},s_0}(\F s_{\lnot E})$.
    Hence, for all $\opt \in \{\inf,\sup\}$, we reduce to the problem of infimum reachability probability on the finitely action-branching MDP $\calM_{n, s_0}$.
    Using Lemma~\ref{lemma:schedulerComplexity} (item~\ref{item:infOptPurePos}), we can find a pure and positional scheduler $\sigma \in \SchedPurePos(\calM_{n, s_0})$ that achieves the infimum reachability probability.
    This scheduler corresponds to a pure history-dependent scheduler in $\calM$.
    \qed
  \end{proof}

  The finite action-branching assumption is
  needed for this statement to hold. Indeed, consider the infinitely
  action-branching MDP
  $\calM = (\{s_0,\Goal,\bad\},\{\alpha_i \mid i \in \nats\},\prob)$
  with $\prob(s_0,\alpha_i,\bad) = 1-2^{-i}$ and
  $\prob(s_0,\alpha_i,\Goal) = 2^{-i}$: there is no
  scheduler that achieves the supremum reachability probability within two steps, which is
  $\Prob_{\calM,s_0}^{\sup}(\F[\le 2] \Goal) = 1$.
  Also, even in the finitely action-branching case, an optimal scheduler for step-bounded reachability may
  require memory~\cite[Figure~6(a)]{BGMR23}, hence the use of $\SchedPureMemoryful$ (and not $\SchedPurePos$) in the statement.

  We now show that given a sequence of step-bounded events, we can ``uniformise'' the schedulers into a single scheduler that shares properties with schedulers from the sequence. 

  \begin{lemma} \label{lem:uniformizingClopen}
    Let $\mdp = (S, \Act, \prob)$ be a finitely action-branching MDP, and $s_0\in S$ be an initial state. Let $(E_n)_{n\in\nats}$ be a sequence of non-decreasing (resp.\ non-increasing) step-bounded events such that for all $n$, $E_n$ has step-bound $n$.
    For all $\epsilon > 0$, there exists a scheduler $\sigma^\epsilon \in \SchedPureMemoryful(\calM)$ such that $\lim_n \Prob^{\sigma^\epsilon}_{\calM, s_0}(E_n) \le \lim_n \Prob^{\inf}_{\calM, s_0}(E_n) + \epsilon$ (resp.\ $\lim_n \Prob^{\sigma^\epsilon}_{\calM, s_0}(E_n) \ge \lim_n \Prob^{\sup}_{\calM, s_0}(E_n) - \epsilon$).
  \end{lemma}
  \begin{proof}
  For all $n\in\nats$, using that $E_n$ is step-bounded and Lemma~\ref{lemma:stepBounded}, there exists $\sigma_n\in\SchedPureMemoryful(\calM)$ such that $\Prob^{\sigma_n}_{\calM,s_0}(E_n) = \Prob^{\inf}_{\calM,s_0}(E_n)$ (resp.\ $\Prob^{\sigma_n}_{\calM,s_0}(E_n) = \Prob^{\sup}_{\calM,s_0}(E_n)$).
  Let $\epsilon > 0$.
  The construction of $\sigma^\epsilon$ is the same for both cases; we distinguish later whether we consider $E_n$ as non-decreasing or non-increasing.

  Let $(\epsilon_j)_{j\ge 1}$ be a sequence of positive real numbers such that
  $\prod_{j=1}^{\infty}(1-\epsilon_j) \ge 1-\epsilon$.
  We define a new
  scheduler $\sigma^{\epsilon} \in \SchedPureMemoryful(\calM^{\opt})$
  using the above family $(\sigma_n)_{n \in N}$, inductively on the
  length of histories.  We explicit the induction hypothesis: for $k \ge 1$, there is an infinite subset $N_k \subseteq \nats$ and a
  finite set $K_k$ of histories of length~$k$ such that:
  \begin{itemize}
  \item for every $h$ that is a strict prefix of some $h' \in K_k$,
    $\sigma^{\epsilon}(h) = \sigma_n(h)$ for all $n \in N_k$;
  \item
    $\Prob^{\sigma^{\epsilon}}_{\calM^{\opt},s_0}(\Cyl(K_k)) \ge
    \prod_{j=1}^{k}(1-\epsilon_j)$.
  \end{itemize}

  We first consider the case $k=1$. There is a unique length-$0$
  history $h = s_0$. Since $\calM^{\opt}$ is finitely
  action-branching, there is an infinite subset $N_1 \subseteq \nats$ such
  that $\sigma_n(s_0)$ is the same action for every $n \in N_1$; we define
  $\sigma^{\epsilon}(s_0)$ as this action. Unless $\calM^{\opt}$
  is finitely \probBranching, there might be infinitely many outcomes
  of $\sigma^{\epsilon}$ of length~$1$. We select a finite number of
  such outcomes $K_1$ which has probability at least $1-\epsilon_1$:
  hence,
  $\Prob^{\sigma^{\epsilon}}_{\calM^{\opt},s_0}(\Cyl(K_1)) \ge 1-\epsilon_1$.
  This shows that the induction hypothesis initially holds.

  We now assume that the induction hypothesis holds for some $k \ge 1$.  We
  let $N_{k+1} \subseteq N_k$ be an infinite set such that for all
  $h \in K_k$ (which have length~$k$), all $\sigma_n(h)$ coincide for
  any $n \in N_{k+1}$; we set $\sigma^{\epsilon}(h)$ as this action.
  As in the base case, there may be infinitely many outcomes of
  $\sigma^{\epsilon}$ of length $k+1$ whose prefixes are in $K_k$. We
  select a finite portion of such outcomes, with a large relative
  probability; formally, let $K_{k+1}$ be a finite set of histories of length~$k + 1$ such that $\Cyl(K_{k+1}) \subseteq \Cyl(K_k)$ and
  $\Prob^{\sigma^{\epsilon}}_{\calM^{\opt},s_0}(\Cyl(K_{k+1}) \mid \Cyl(K_k)) \ge
  1-\epsilon_{k+1}$.  Using the induction hypothesis and Bayes'
  theorem, we get
  $\Prob^{\sigma^{\epsilon}}_{\calM^{\opt},s_0}(\Cyl(K_{k+1})) \ge
  \prod_{j=1}^{k+1}(1-\epsilon_j)$. This ends the induction step.

  It remains to show that the scheduler $\sigma^{\epsilon}$ we built satisfies the properties of the statement.

  We first focus on the case of a non-decreasing sequence of events $(E_n)_n$.
  Let $k\in\nats$. By construction, there is an infinite set $N_k$ such that for all $n\in N_k$, $\sigma^{\epsilon}(h) = \sigma_n(h)$ for all prefixes of histories in~$K_k$. Take $n_k \in N_k$ with $n_k \ge k$.
  We have
  \begin{align*}
    \Prob^{\sigma^{\epsilon}}_{\calM,s_0}(E_k)
    &= \Prob^{\sigma^{\epsilon}}_{\calM,s_0}(E_k \cap \Cyl(K_k)) + \Prob^{\sigma^{\epsilon}}_{\calM,s_0}(E_k \cap \neg \Cyl(K_k)) \\
    &\le \Prob^{\sigma^{\epsilon}}_{\calM,s_0}(E_k \cap \Cyl(K_k)) + (1 - \prod_{j=1}^{k}(1-\epsilon_j)) \\
    &= \Prob^{\sigma_{n_k}}_{\calM,s_0}(E_k\cap \Cyl(K_k)) + (1 - \prod_{j=1}^{k}(1-\epsilon_j)) \\
    &\le \Prob^{\sigma_{n_k}}_{\calM,s_0}(E_k) + (1 - \prod_{j=1}^{k}(1-\epsilon_j)) \\
    &\le \Prob^{\sigma_{n_k}}_{\calM,s_0}(E_{n_k}) + (1 - \prod_{j=1}^{k}(1-\epsilon_j))\enspace.
  \end{align*}
  The first line follows from the law of total probability; the second line follows from the construction of $\sigma^{\epsilon}$; the third line follows from the equality of $\sigma^{\epsilon}$ and~$\sigma_{n_k}$ on the prefixes of histories in $K_k$, and the fact that $E_k$ has step bound~$k$; the fourth line follows from the inclusion $E_k \cap \Cyl(K_k) \subseteq E_k$; the last line follows from the non-decreasingness $(E_n)_n$ and the fact that $n_k \ge k$.
  Taking the limit as $k$ grows to $\infty$, by choosing an increasing subsequence $(n_k)_k$, we get
  \begin{align*}
    \lim_{k\to\infty} \Prob^{\sigma^{\epsilon}}_{\calM,s_0}(E_k)
    &\le \lim_{k\to\infty} \Prob^{\sigma_{n_k}}_{\calM,s_0}(E_{n_k}) + (1 - \prod_{j=1}^{k}(1-\epsilon_j)) \\
    &\le \lim_{n\to\infty} \Prob^{\sigma_{n}}_{\calM,s_0}(E_{n}) + \epsilon \\
    &= \lim_{n\to\infty} \Prob^{\inf}_{\calM,s_0}(E_{n}) + \epsilon\enspace.
  \end{align*}
  This settles the non-decreasing case.

  The claim for a non-increasing sequence $(E_n)_n$ (and $\sup$) follows from the non-decreasing claim (with $\inf$) proved above by taking the complements of events $E_n$.
  \qed
  \end{proof}

We can now show that for a monotone sequence of step-bounded events, the limit of the optimal probabilities is equal to the optimal probability of the limit event.
Observe that Lemma~\ref{lem:schemeUseful} directly follows from the following statement by taking $E_n = \F[\le n] \Goal$, and $E_n = \F[\le n] \Goal \lor \G[\le n] (\lnot \Goal \land \lnot \bad^\opt)$.

\begin{lemma} \label{lem:limoptEqualsoptlim}
  Let $\mdp = (S, \Act, \prob)$ be a finitely action-branching MDP, $s_0\in S$ be an initial state, and $\opt \in \{\inf, \sup\}$.
  Let $(E_n)_{n\in\bbN}$ be a monotone sequence of step-bounded events, where $E_n$ has step bound~$n$.
  Then,
  \[
    \lim_{n\to\infty} \opt_{\sigma} \Prob_{\mdp,s_0}^\sigma(E_n) = \opt_{\sigma} \lim_{n\to\infty} \Prob_{\mdp,s_0}^\sigma(E_n) \enspace.
  \]
\end{lemma}
\begin{proof}
  We give the proof for $\opt = \inf$; the case $\opt = \sup$ is symmetrical.
  Notice first that the left-hand side limit exists as the sequence $(\inf_{\sigma} \Prob_{\mdp,s_0}^\sigma(E_n))_{n\in\nats}$ is monotone (due to $(E_n)_n$ being monotone) and bounded.

  We have in general $\lim_{n\to\infty} \inf_{\sigma} \Prob_{\mdp,s_0}^\sigma(E_n) \le \inf_{\sigma} \lim_{n\to\infty} \Prob_{\mdp,s_0}^\sigma(E_n)$, as any scheduler chosen independently of $n$ on the right-hand side can be chosen for all $n$ on the left-hand side.
  It remains to show the other inequality.
  We distinguish whether $E_n$ is non-increasing or non-decreasing.

  The case where $(E_n)_n$ is non-decreasing was dealt with in Lemma~\ref{lem:uniformizingClopen}: for all $\epsilon > 0$, there exists a scheduler $\sigma^\epsilon \in \SchedPureMemoryful(\calM)$ such that $\lim_n \Prob^{\sigma^\epsilon}_{\calM, s_0}(E_n) \le \lim_n \Prob^{\inf}_{\calM, s_0}(E_n) + \epsilon$.
  Hence, $\inf_{\sigma} \lim_{n\to\infty} \Prob_{\mdp,s_0}^\sigma(E_n) \le \lim_n \inf_\sigma \Prob^{\sigma}_{\calM, s_0}(E_n)$.

  Assume now that $(E_n)_n$ is non-increasing.
  Using Lemma~\ref{lemma:stepBounded}, for every $n\in\nats$, let $\sigma_n\in\SchedPureMemoryful(\calM)$ be a scheduler such that $\Prob_{\mdp,s_0}^{\sigma_n}(E_n) = \inf_{\sigma} \Prob_{\mdp,s_0}^\sigma(E_n)$.
  Let $\epsilon > 0$.
  By the definition of limit, there exists $n_0 > 0$ such that $\Prob_{\mdp,s_0}^{\sigma_{n_0}}(E_{n_0}) \le \lim_n \inf_{\sigma} \Prob_{\mdp,s_0}^\sigma(E_n) + \epsilon$.
  Hence,
  \begin{align*}
    \inf_{\sigma} \lim_{n\to\infty} \Prob_{\mdp,s_0}^\sigma(E_n)
    &\le \lim_{n\to\infty} \Prob_{\mdp,s_0}^{\sigma_{n_0}}(E_n) \\
    &\le \Prob_{\mdp,s_0}^{\sigma_{n_0}}(E_{n_0}) \\
    &\le \lim_n \inf_{\sigma} \Prob_{\mdp,s_0}^\sigma(E_n) + \epsilon\enspace,
  \end{align*}
  where the second line holds because the sequence $(E_n)_n$ is non-increasing.
  As this holds for every $\epsilon > 0$, we obtain the desired inequality.
  \qed
\end{proof}

\section{Omitted proofs of Section~\ref{sec:constructions}}
\label{app:constructions}

This section is devoted to the missing proofs of Lemmas~\ref{lemma:Mopt},~\ref{lemma:HnRn}, and~\ref{lemma:slices} about the correctness of our approximation schemes (Section~\ref{sec:constructions}).
We will make use of the following inequality.
\begin{remark}
  \label{fact:union-ub-sup}
  Let $\calM$ be an MDP, $s$ be a state of $\calM$, $\opt \in
  \{\inf,\sup\}$, and $\phi_1$ and~$\phi_2$ be two events. For every scheduler $\sigma$,
  $\Prob_{\calM,s}^\sigma(\phi_1 \vee \phi_2) \leq
  \Prob_{\calM,s}^\sigma(\phi_1) +
  \Prob_{\calM,s}^{\sup}(\phi_2)$.
  Thus:
  \[
    \Prob_{\calM,s}^\opt(\phi_1 \vee \phi_2) \leq \Prob_{\calM,s}^\opt(\phi_1) +
    \Prob_{\calM,s}^{\sup}(\phi_2) \enspace.
  \]
\end{remark}

The first two missing proofs deal with events that are used in the first approximation scheme (Section~\ref{subsec:collapse}).

\Mopt*
\begin{proof}
  We first consider the case $\opt = \inf$.
  Any scheduler
  $\sigma$ in $\calM$ straightforwardly yields a scheduler $\sigma'$
  in $\calM^{\inf}$ by mimicking $\sigma$ until $\bad^{\inf}$ is reached
  (if this occurs) and then behaving arbitrarily.
  We have that
    \begin{align*}
      \Prob^{\sigma'}_{\calM^{\inf},s_0}(\F \Goal) & = \Prob^{\sigma'}_{\calM^{\inf},s_0}(\F \Goal \wedge \G \neg \bad^{\inf})\\
      & = \Prob^\sigma_{\calM,s_0}(\F \Goal \wedge \G \neg \Avoid^{\inf}_{\calM}(\target))\\
& \leq \Prob^\sigma_{\calM,s_0}(\F \Goal) \enspace.
    \end{align*}
    Thus,
    $\Prob^{\inf}_{\calM^{\inf},s_0}(\F \Goal) \leq
    \Prob^{\inf}_{\calM,s_0}(\F \Goal)$.
 
    To prove the other inequality, for every scheduler $\sigma$ in
    $\calM^{\inf}$, and every $\epsilon >0$, we show that there exists
    a scheduler $\sigma'$ in $\calM$ with
    $\Prob^{\sigma'}_{\calM,s_0}(\F \Goal) \le
    \Prob^\sigma_{\calM^{\inf},s_0}(\F \Goal) + \epsilon$.  To do so,
    for every $\epsilon>0$ and every state
    $s \in \Avoid^{\inf}_\calM(\Goal)$, let $\sigma^s_\epsilon$ be a
    scheduler achieving
    $\Prob^{\sigma^s_\epsilon}_{\calM,s} (\F \Goal) \le \epsilon$. This is possible
    by definition of $\Avoid^{\inf}_\calM(\Goal)$.  We build a
    scheduler $\sigma'$ which follows $\sigma$ until the
    first visit to $\Avoid^{\inf}_\calM(\Goal)$, and then switches to
    $\sigma^s_\epsilon$ (where $s$ is the first visited state in
    $\Avoid^{\inf}_\calM(\Goal)$). Then,
    $\Prob^{\sigma'}_{\calM,s_0}(\F \Goal) \le
    \Prob^\sigma_{\calM^{\inf},s_0}(\F \Goal) + \epsilon$ and thus
    $\Prob^{\inf}_{\calM,s_0}(\F \Goal) \le
    \Prob^{\inf}_{\calM^{\inf},s_0}(\F \Goal) + \epsilon$. Since this
    holds for every $\epsilon > 0$, we obtain that
    $\Prob^{\inf}_{\calM,s_0}(\F \Goal) \le
    \Prob^{\inf}_{\calM^{\inf},s_0}(\F \Goal)$.

    Let us now consider the case $\opt=\sup$. Observe that
    any path in $\calM$ that reaches $\Goal$ satisfies
    $\G \neg \Avoid^{\sup}_{\calM}(\Goal)$ (recall that $\Goal$ is absorbing). Indeed, by definition of
    $\Avoid^{\sup}_{\calM}(\Goal)$, $\F \Avoid^{\sup}_{\calM}(\Goal)$
    implies $\G \neg \Goal$. Therefore, for any scheduler $\sigma$ in
    $\calM$,
    $\Prob^\sigma_{\calM}(\F \Goal) = \Prob^\sigma_{\calM}(\F \Goal
    \wedge \G \neg \Avoid^{\sup}_{\calM}(\Goal))$. We use this fact to
    prove both inequalities on supremum probability values.

    Pick a scheduler $\sigma$ in $\calM^{\sup}$, and extend it to
    $\calM$ into a scheduler $\sigma'$ that mimicks $\sigma$, unless
    $\Avoid^{\sup}_\calM(\Goal)$ is reached---in this case, it behaves
    arbitrarity. We have:
  \begin{align*}
    \Prob^{\sigma'}_{\calM,s_0}(\F \Goal) & =     \Prob^{\sigma'}_{\calM,s_0}(\F \Goal \wedge \G \neg \Avoid^{\sup}_\calM(\Goal))\\
                                          & =  \Prob^\sigma_{\calM^{\sup},s_0}(\F \Goal \wedge \G \neg \bad^{\sup})\\
    & = \Prob^\sigma_{\calM^{\sup},s_0}(\F \Goal) \enspace.
  \end{align*}
  Therefore $\Prob^{\sup}_{\calM^{\sup},s_0}(\F \Goal) \leq \Prob^{\sup}_{\calM,s_0}(\F \Goal)$.

  Conversely,  with the same
  construction as above, for any scheduler $\sigma$ in $\calM$, we
  build a scheduler $\sigma'$ in $\calM^{\sup}$ that mimicks $\sigma$
  until $\bad^{\opt}$ is reached. We observe that
  \begin{align*}
    \Prob^\sigma_{\calM, s_0}(\F \Goal) &= \Prob^\sigma_{\calM, s_0}(\F \Goal
                                     \wedge \G \neg \Avoid^{\sup}_{\calM}(\Goal))\\
                                   & = \Prob^{\sigma'}_{\calM^{\sup}, s_0}(\F \Goal \wedge \G \neg \bad^{\sup})\\
    & =  \Prob^{\sigma'}_{\calM^{\sup}, s_0}(\F \Goal) \enspace.
  \end{align*}
  Thus
  $\Prob^{\sup}_{\calM^{\sup},s_0}(\F \Goal) \geq
  \Prob^{\sup}_{\calM,s_0}(\F \Goal)$, which allows us to conclude.
  \qed
\end{proof}

\HnRn*
\begin{proof}   
  Let us show that for every $n$,
  \[
    \sem{R_n}{\calM^{\opt},s_0} \subseteq \sem{\F
    \Goal}{\calM^{\opt},s_0} \subseteq \sem{R_n \vee
    H_n^{\opt}}{\calM^{\opt},s_0}\enspace.
  \]
  The first inclusion is obvious. To
  prove the second,
  observe that the complement of
  $\sem{R_n \vee H_n^{\opt}}{\calM^{\opt},s_0}$ is
  $\sem{(\G[\leq n] \neg \Goal) \wedge (\F[\leq n]
    \bad^{\opt})}{\calM^{\opt},s_0} = \sem{\F[\leq n]
    \bad^{\opt}}{\calM^{\opt},s_0}$.
    Any path in that set thus reaches
  the losing sink state $\bad^{\opt}$ in the first $n$ steps, and
  therefore belongs to $\sem{\neg \F \Goal}{\calM^{\opt},s_0}$.
  Finally, the last inequality is a direct application of
  Remark~\ref{fact:union-ub-sup}.
\qed
\end{proof}

We recall that Lemma~\ref{lem:schemeUseful} was shown in Appendix~\ref{app:stepBounded} (it is an instantiation of Lemma~\ref{lem:limoptEqualsoptlim}).
We now focus on the missing proof for the second approximation scheme (Lemma~\ref{lemma:slices}, Section~\ref{subsec:sliced}).

\slices*
\begin{proof}
  \begin{enumerate}
  \item For the first item, notice that every path which reaches $\target$ in
  $\calM^{\opt}_n$ also reaches $\target$ in $\calM^\opt$. Also,
  every path which reaches $\target$ in $\calM^\opt$ can be partly
  read in $\calM^{\opt}_n$: it either reaches $\target$ or ends up in $s_\bot^n$.
  \item The second item is a direct application of
  Remark~\ref{fact:union-ub-sup}.
  \item For the third item, notice that every path in $\calM^\opt$ which reaches $\target$ within~$n$ steps also reaches $\target$ in $\calM^{\opt}_n$ (avoiding
    $s_\bot^n$).
  \item For the fourth item, notice that every path which reaches $\target$ or~$s_\bot^n$
    in $\calM^{\opt}_n$ either reaches $\target$ within $n$ steps or
    does not visit $\target$ and $\bad^\opt$ in $\calM^\opt$ during
    the $n$ first steps.
    \qed
  \end{enumerate}
\end{proof}

\section{Omitted proofs from Section~\ref{sec:convergingSchemes}}
\label{app:convergence}

This section is devoted to the missing proof of Theorem~\ref{theo:approx-opt}, about the convergence of our approximation schemes, the missing proof of Theorem~\ref{theo:approxSchemeTwoImpliesOne}, about a strong link between the convergences of the two approximation schemes (Section~\ref{subsec:scheme1}), and the missing proof of Lemma~\ref{lem:lemmaNonFleeing}, about a relaxed condition for the convergence of $\ApproxScheme_2^{\sup}$ (Section~\ref{subsec:nonFleeing}).

We prove Theorem~\ref{theo:approx-opt} over the next two lemmas.
We fix
$\opt \in \{\inf,\sup\}$. We let $\calM = (S,\Act,\prob)$ be an~MDP,
$s_0 \in S$ be an initial state, and $\target$ be a target state.  We first
prove that the decisiveness of $\calM$ transfers to
$\calM^{\opt}$ (under finitely action-branching assumption when $\opt = \inf$).

\begin{lemma}
  \label{lemma:optdecisiveMopt}
  If $\opt = \inf$, assume that $\calM$ is
  finitely action-branching. If~$\calM$ is $\opt$-decisive
  w.r.t.\ $\Goal$ from $s_0$, then so is $\calM^{\opt}$.
\end{lemma}
\begin{proof}
  Assume that $\calM$ is $\opt$-decisive
  w.r.t.\ $\Goal$ from $s_0$.
  We show that $\calM^{\opt}$ is $\opt$-decisive w.r.t.\ $\Goal$ from $s_0$.

  We fix a pure and positional
  scheduler $\sigma_0 \in \SchedPurePos(\calM)$ on $\Avoid^{\opt}_\calM(\Goal)$ such that for
  every $s \in \Avoid^{\opt}_\calM(\Goal)$,
  $\Prob^{\sigma_0}_{\calM,s}(\F \Goal) = 0$; any scheduler will do
  the job if $\opt = \sup$, while one will be given by item~\ref{item:infOptPurePos}
  of Lemma~\ref{lemma:schedulerComplexity} when $\opt = \inf$ (thanks to the
  finitely action-branching assumption). In particular, starting from
  $\Avoid^{\opt}_\calM(\Goal)$, paths under scheduler $\sigma_0$ only
  visit states in $\Avoid^{\opt}_\calM(\Goal)$.

  Let $\sigma \in \SchedPurePos(\calM^{\opt})$ be a
  pure and positional scheduler in $\calM^{\opt}$; we show that
  $\Prob^\sigma_{\calM^{\opt},s_0}(\F \Goal \vee \F
  \Avoid^{\opt}_{\calM^{\opt}}(\Goal)) = 1$.  Observe that by
  construction of $\calM^{\opt}$,
  $\Avoid^{\opt}_{\calM^{\opt}}(\Goal) = \{\bad^{\opt}\}$.

  We define a scheduler $\sigma'$ in $\calM$ that mimicks $\sigma$ as
  long as $\Avoid^{\opt}_{\calM^{\opt}}(\Goal)$ is not reached, and if
  so behaves as $\sigma_0$. Gluing these two pure and positional
  partial schedulers, $\sigma'$ is a pure and positional scheduler
  that satisfies
  $\Prob^{\sigma'}_{\calM,s_0} (\F \Goal \vee \F
  \Avoid^{\opt}_\calM(\Goal)) = \Prob^\sigma_{\calM^{\opt},s_0}(\F
  \Goal \vee \F \bad^{\opt})$.
  Since $\calM$ is $\opt$-decisive w.r.t.\ $\Goal$ from~$s_0$, we have that $\Prob^{\sigma'}_{\calM,s_0} (\F \Goal \vee \F \Avoid^{\opt}_\calM(\Goal)) = 1$.
  
  Combining all of the above, we have
  \begin{align*}
    \Prob^\sigma_{\calM^{\opt},s_0}(\F \Goal \vee \F \Avoid^{\opt}_{\calM^{\opt}}(\Goal))
    &= \Prob^\sigma_{\calM^{\opt},s_0}(\F \Goal \vee \F \bad^{\opt}) \\
    &= \Prob^{\sigma'}_{\calM,s_0} (\F \Goal \vee \F \Avoid^{\opt}_\calM(\Goal)) \\
    &= 1\enspace,
  \end{align*}
  which concludes the proof.
  \qed
\end{proof}

To show Theorem~\ref{theo:approx-opt} (the convergence of scheme
$\ApproxScheme_1^{\opt}$ when~$\calM$ is finitely action-branching and
$\opt$-decisive), it suffices to prove the hypotheses of
Theorem~\ref{theo:generic-correctness+termination} (i.e., that $\lim_{n \to \infty} \Prob^{\sup}_{\calM^{\opt},s_0}(H^{\opt}_n) = 0$).
\begin{lemma}
  \label{lemma:extraction}
  If $\calM$ is finitely action-branching and $\opt$-decisive
  w.r.t.\ $\Goal$ from~$s_0$, then
  \[
    \lim_{n \to \infty} \Prob^{\sup}_{\calM^{\opt},s_0}(H^{\opt}_n) =
    0 \enspace.
  \]
\end{lemma}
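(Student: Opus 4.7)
The plan is to mirror the proof strategy of Lemma~\ref{lemma:extraction}, replacing the roles of $\inf$-decisiveness by $\sup$-decisiveness and invoking Lemma~\ref{lemma:supdecisiveMsup} instead of Lemma~\ref{lemma:infdecisiveMinf}. First I would proceed by contradiction: assume there exist $\epsilon>0$ and an infinite set of indices $N = \{n_0 < n_1 < \cdots\}$ such that for every $n \in N$, $\Prob^{\sup}_{\calM^{\sup},s_0}(H^{\sup}_n) > \epsilon$. Since $H^{\sup}_n$ is a step-bounded safety property, its supremum probability coincides with the infimum probability of reaching $\Goal \vee \bad^{\sup}$ in at most $n$ steps, so Lemma~\ref{lemma:opt-boundedreach-pure} yields, for each $n \in N$, a pure (history-dependent) scheduler $\sigma_n \in \SchedPureMemoryful(\calM^{\sup})$ with $\Prob^{\sigma_n}_{\calM^{\sup},s_0}(H^{\sup}_n) > \epsilon$.

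The next step is a König-style diagonal extraction, identical to the one in the proof of Lemma~\ref{lemma:extraction}. Assuming $\calM$ (and hence $\calM^{\sup}$) is finitely branching, I would iteratively refine $N$ to infinite sub-sequences $N \supseteq N_0 \supseteq N_1 \supseteq \cdots$ so that all schedulers in $N_k$ agree on the finitely many histories of length $\le k{+}1$ reachable under the previously fixed choices. Defining $\sigma^{\!\star}$ according to these uniform choices yields a pure scheduler in $\SchedPureMemoryful(\calM^{\sup})$ satisfying, for every $k$, $\Prob^{\sigma^{\!\star}}_{\calM^{\sup},s_0}(H^{\sup}_k) \ge \epsilon$, and therefore
\[
\Prob^{\sigma^{\!\star}}_{\calM^{\sup},s_0}\bigl(\G(\neg \Goal \wedge \neg \bad^{\sup})\bigr) = \lim_{k \to \infty}\Prob^{\sigma^{\!\star}}_{\calM^{\sup},s_0}(H^{\sup}_k) \ge \epsilon.
\]
As noted in the sketch of Lemma~\ref{lemma:extraction}, under the weaker assumption of only finite action-branching this König argument requires the more careful treatment deferred to the appendix, but the same construction goes through unchanged here.

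To conclude, I would apply Lemma~\ref{lemma:infOptPurePos}(1) to $\calM^{\sup}$ (which is finitely action-branching) with target $\Goal \cup \{\bad^{\sup}\}$: since $\sup$-safety of $\neg\Goal \wedge \neg\bad^{\sup}$ is the complement of $\inf$-reachability of $\Goal \cup \{\bad^{\sup}\}$, there exists a pure positional scheduler $\sigma^{\!\star\!\star} \in \SchedPurePos(\calM^{\sup})$ with
\[
\Prob^{\sigma^{\!\star\!\star}}_{\calM^{\sup},s_0}\bigl(\G(\neg \Goal \wedge \neg \bad^{\sup})\bigr) \ge \Prob^{\sigma^{\!\star}}_{\calM^{\sup},s_0}\bigl(\G(\neg \Goal \wedge \neg \bad^{\sup})\bigr) \ge \epsilon.
\]
Equivalently, $\Prob^{\sigma^{\!\star\!\star}}_{\calM^{\sup},s_0}(\F \Goal \vee \F \bad^{\sup}) \le 1-\epsilon < 1$. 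But Lemma~\ref{lemma:supdecisiveMsup} ensures that $\calM^{\sup}$ inherits $\sup$-decisiveness from $\calM$, and since $\Avoid^{\sup}_{\calM^{\sup}}(\Goal) = \{\bad^{\sup}\}$, every pure positional scheduler in $\calM^{\sup}$ satisfies $\Prob^{\sigma}(\F \Goal \vee \F \bad^{\sup}) = 1$, contradicting the inequality above.

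The main obstacle is the extraction argument, since we need $\sigma^{\!\star}$ to simultaneously witness $H^{\sup}_k$ for all $k$ and to live in a class (pure memoryful) whose step-bounded optimality is known; the König argument handles this under finite branching, and the appendix argument handles finite action-branching. Everything else is a direct translation of the infimum proof, and the key conceptual point is that once we have a positive probability of perpetual safety, $\sup$-decisiveness of $\calM^{\sup}$ immediately closes the case because the only avoid state is $\bad^{\sup}$.
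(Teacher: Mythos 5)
Your proposal is correct and follows essentially the same route as the paper: contradiction, reduction of the step-bounded safety property to pure schedulers via the duality with bounded reachability, the K\"onig-style extraction of $\sigma^{\!\star}$ borrowed from Lemma~\ref{lemma:extraction}, passage to a pure positional optimal scheduler for the safety objective via Lemma~\ref{lemma:infOptPurePos}(1), and the final clash with $\sup$-decisiveness of $\calM^{\sup}$ (Lemma~\ref{lemma:supdecisiveMsup}). The only cosmetic difference is the constant after extraction (the paper records $\epsilon' = \epsilon - \bigl(1-\prod_{j}(1-\epsilon_j)\bigr)$ in the general finite action-branching case, while you keep $\epsilon$ in the finitely branching case), which does not affect the contradiction.
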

\begin{proof}
  This proof uses the following equality (which follows from Lemma~\ref{lem:limoptEqualsoptlim}; the sequence $(H_n^{\opt})_n$ is non-increasing, and for all $n$, $H_n^{\opt}$ has step bound~$n$):
  \begin{align*}
    \lim_{n\to\infty} \Prob^{\sup}_{\calM^{\opt},s_0}(H_n^{\opt})
    &= \sup_{\sigma} \lim_{n\to\infty} \Prob_{\calM^{\opt},s_0}^\sigma(H_n^{\opt})\enspace.
  \end{align*}

  Given the continuity of probabilities, we have
  \[
    \sup_{\sigma} \lim_{n\to\infty} \Prob_{\calM^{\opt},s_0}^\sigma(H_n^{\opt})
    = \sup_{\sigma} \Prob_{\calM^{\opt},s_0}^\sigma(\G (\neg \Goal \wedge
    \neg \bad^{\opt}))\enspace.
  \]
 
  For the safety objective
  $\G (\neg \Goal \wedge \neg \bad^{\opt})$, applying
  Lemma~\ref{lemma:schedulerComplexity} (item~\ref{item:infOptPurePos}) to the finitely action-branching $\calM^{\opt}$, there is a pure and positional
  scheduler $\sigma^{\!\star}$ such that
  \begin{align*}
    \sup_{\sigma} \Prob_{\calM^{\opt},s_0}^\sigma(\G (\neg \Goal \wedge
    \neg \bad^{\opt}))
    &= \Prob_{\calM^{\opt},s_0}^{\sigma^{\!\star}}(\G (\neg \Goal
    \wedge \neg \bad^{\opt}))\\
    &= 1 - \Prob_{\calM^{\opt},s_0}^{\sigma^{\!\star}}(\F \Goal \vee
    \F \bad^{\opt})\enspace.
  \end{align*}

  Moreover, $\calM$ is $\opt$-decisive w.r.t.~$\Goal$ from $s_0$, hence so is $\calM^{\opt}$ (Lemma~\ref{lemma:optdecisiveMopt}). Since
  $\Avoid^{\opt}_{\calM^{\opt}}(\Goal) = \{\bad^{\opt}\}$, we have
  $\Prob_{\calM^{\opt},s_0}^{\sigma^{\!\star}}(\F \Goal \vee
    \F \bad^{\opt}) = 1$.
  Combining all the equalities above, we deduce that $\lim_{n\to\infty} \Prob^{\sup}_{\calM^{\opt},s_0}(H_n^{\opt}) = 0$.
  \qed
\end{proof}

We now move on to the proof of Theorem~\ref{theo:approxSchemeTwoImpliesOne}, which states that the convergence of $\ApproxScheme_2^{\inf}$ implies the convergence of $\ApproxScheme_1^{\inf}$ under finite branching.

\approxSchemeTwoImpliesOne*
\begin{proof}
  Assume that $\ApproxScheme^{\inf}_1$ does not converge on $\calM$ from $s_0$.
  This implies that there exists $\epsilon > 0$ such that $\lim_n p_n^{\inf, +} \ge \lim_n p_n^{\inf, -} + \epsilon$. 
  Thanks to Lemma~\ref{lem:limoptEqualsoptlim} and the continuity of probabilities, we have that $\lim_n p_n^{\inf, +} = \Prob^{\inf}_{\calM^{\inf},s_0}(\F \Goal \lor \G(\neg \Goal \wedge
  \neg \bad^{\inf}))$ and $\lim_n p_n^{\inf, -} = \Prob^{\inf}_{\calM^{\inf},s_0}(\F \Goal)$.
  Hence,
  \[\Prob^{\inf}_{\calM^{\inf},s_0}(\F \Goal \lor \G(\neg \Goal \wedge
  \neg \bad^{\inf})) \ge \Prob^{\inf}_{\calM^{\inf},s_0}(\F \Goal) + \epsilon\enspace.\]

  Fix a scheduler $\sigma\in\Sched(\calM^{\inf})$.
  By the above, we have
  \begin{align}
    \Prob^{\sigma}_{\calM^{\inf},s_0}(\F \Goal) + \Prob^{\sigma}_{\calM^{\inf},s_0}(\G(\neg \Goal \wedge
    \neg \bad^{\inf})) \ge \Prob^{\inf}_{\calM^{\inf},s_0}(\F \Goal) + \epsilon\enspace. \label{eq:lemmaWithNuInIt}
  \end{align}

  For all $\nu > 0$, consider the set $S_\nu = \{s \in S^{\inf} \mid \Prob^{\inf}_{\calM^{\inf}, s}(\F \Goal) \le \nu\}$.
  Observe that for all $\nu > 0$,
  \[
    \Prob^{\sigma}_{\calM^{\inf},s_0}(\F S_\nu) \ge \Prob^{\sigma}_{\calM^{\inf},s_0}(\G(\neg \Goal \wedge
    \neg \bad^{\inf})).
  \]
  Indeed, consider the complement of these events: a path in $\G \lnot S_\nu$ always has a probability lower bounded by $\nu$ to reach $\Goal$ from every state and for all schedulers; therefore, the inclusion $\G \lnot S_\nu \subseteq \F \Goal$ holds up to a null set.
  Hence, reaching $S_\nu$ is necessary to avoid reaching $\Goal$ almost surely.

  For all $n\in\nats$, $\calM^{\inf}_n$ is finite (since $\calM$ is finitely branching).
  All states $s$ in $S^{\inf}_n$ --- except for $\bad^{\inf}$ and $s_\bot^n$ --- are such that $\Prob_{\calM^{\inf}, s}^{\inf}(\F \Goal) > 0$.
  Take $\nu' = \min\{\Prob_{\calM^{\inf}, s}^{\inf}(\F \Goal) \mid s \in S^{\inf}_n \setminus \{\bad^{\inf}, s_\bot^n\}\} / 2$.
  Clearly, $(S^{\inf}_n \setminus \{\bad^{\inf}, s_\bot^n\}) \cap S_{\nu'} = \emptyset$.

  Therefore, any path that reaches $S_{\nu'}$ in $\calM^{\inf}$ corresponds to a path that reaches $s_\bot^n$ in $\calM^{\inf}_n$.
  We have
\[
  \Prob^{\sigma}_{\calM^{\inf}_n,s_0}(\F s_\bot^n) \ge \Prob^{\sigma}_{\calM^{\inf},s_0}(\F S_{\nu'}) \ge \Prob^{\sigma}_{\calM^{\inf},s_0}(\G(\neg \Goal \wedge
  \neg \bad^{\inf}))\enspace.
\]

Hence, for all $n$ sufficiently large, we have
\begin{align*}
  \Prob^{\sigma}_{\calM^{\inf}_n,s_0}(\F (\Goal \lor s_\bot^n))
  &= \Prob^{\sigma}_{\calM^{\inf}_n,s_0}(\F \Goal) + \Prob^{\sigma}_{\calM^{\inf}_n,s_0}(\F s_\bot^n) \\
  &\ge \Prob^{\sigma}_{\calM^{\inf}_n,s_0}(\F \Goal) + \Prob^{\sigma}_{\calM^{\inf},s_0}(\G(\neg \Goal \wedge \neg \bad^{\inf})) \\
  &\ge \Prob^{\sigma}_{\calM^{\inf},s_0}(\F \Goal) - \frac{\epsilon}{2} + \Prob^{\sigma}_{\calM^{\inf},s_0}(\G(\neg \Goal \wedge \neg \bad^{\inf})) \\
  &\ge \Prob^{\inf}_{\calM^{\inf},s_0}(\F \Goal) + \frac{\epsilon}{2} \enspace,
\end{align*}
where the third line holds since $\lim_n \Prob^{\sigma}_{\calM^{\inf}_n,s_0}(\F \Goal) = \Prob^{\sigma}_{\calM^{\inf},s_0}(\F \Goal)$ (we have, by Lemma~\ref{lemma:slices}, that $p_n^{\opt, -} \le \Prob^{\sigma}_{\calM^{\inf}_n,s_0}(\F \Goal) \le \Prob^{\sigma}_{\calM^{\inf},s_0}(\F \Goal)$, and by Lemma~\ref{lem:schemeUseful}, that $\lim_n p_n^{\opt, -} = \Prob^{\sigma}_{\calM^{\inf},s_0}(\F \Goal)$); the fourth line follows from Equation~\eqref{eq:lemmaWithNuInIt}.

We conclude that $\ApproxScheme^{\inf}_2$ does not converge on $\calM$ from $s_0$.
\qed
\end{proof}

We finally show Lemma~\ref{lem:lemmaNonFleeing}, which is the key step in the proof of Theorem~\ref{theo:nonFleeing}.
\lemmaNonFleeing*
  \begin{proof}
    We first show that it is sufficient to prove that for all
    $\sigma \in \SchedPurePos(\calM^{\sup})$,
    $\Prob^{\sigma}_{\calM^{\sup},s_0}
    \left(\mathsf{div}\right) =0$.

    Make this hypothesis, and towards a contradiction, assume that
    there exists $\varepsilon>0$ and an infinite sequence
    $N = \{n_0<n_1< \dots\}$ such that for every $n \in N$,
    \[
      \Prob^{\sup}_{\calM_n^{\sup},s_0} (\F s_\bot^n) \ge \varepsilon
      \enspace.
    \]
    Since $\calM$ is finitely branching, $\calM_n^{\sup}$ is a finite
    MDP. Hence, by Lemma~\ref{lemma:finiteMDP}, for every $n \in N$,
    there is $\sigma_n \in \SchedPurePos(\calM_n^{\sup})$ such that
    $\Prob^{\sigma_n}_{\calM_n^{\sup},s_0} (\F s_\bot^n) =
    \Prob^{\sup}_{\calM_n^{\sup},s_0} (\F s_\bot^n)$. The scheduler
    $\sigma_n$ is defined on $S_n^{\sup}$, which is a finite
    set.
    
    Following similar but simpler arguments to those in the proof of
    Lemma~\ref{lem:uniformizingClopen}, we ``uniformise'' this family
    of schedulers as follows. We let $N_1 \subseteq N$ be an infinite
    set of indices such that all $\sigma_n$'s with $n \in N_1$
    coincide on $S_1^{\sup}$; we define $\sigma^{\!\star}$ on
    $S_1^{\sup}$ in accordance with these schedulers. Then, for every
    $i \ge 1$, we let $N_{i+1} \subseteq N_i$ be an infinite set of
    indices such that all $\sigma_n$'s with $n \in N_{i+1}$ coincide
    on $S_{i+1}^{\sup}$; we then define $\sigma^{\!\star}$ on
    $S_{i+1}^{\sup}$ in accordance with these schedulers. Globally, the
    scheduler $\sigma^{\!\star}$ is defined on $S^{\sup}$. For every
    $i \in N$, let $n_i \in N_i$
    be such that $n_i>i+1$ and $\sigma^{\!\star}$ coincides with
    $\sigma_{n_i}$ on $S_i^{\sup}$. By construction of
    $\calM_{n_i}^{\sup}$, 
    \begin{align*}
      \Prob^{\sigma^{\!\star}}_{\calM^{\sup},s_0} (\F (S^{\sup}_{i+1}
      \setminus S^{\sup}_{i})) & =
                                 \Prob^{\sigma_{n_i}}_{\calM_{n_i}^{\sup},s_0}
                                 (\F (S^{\sup}_{i+1}
                                 \setminus S^{\sup}_{i})) \\
                               & \ge \Prob^{\sigma_{n_i}}_{\calM_{n_i}^{\sup},s_0}
                                 (\F s_\bot^{n_i}) ~ \text{since $\F s_\bot^{n_i} \subseteq \F (S^{\sup}_{i+1}
                                 \setminus S^{\sup}_{i})$} \\
                               & \ge
                                 \varepsilon \enspace.
    \end{align*}
    Since the event $\F (S^{\sup}_{n_2+1} \setminus S^{\sup}_{n_2})$
    is included in $\F (S^{\sup}_{n_1+1} \setminus S^{\sup}_{n_1})$ as
    soon as $n_2 \ge n_1$, we get that
    \[
      \Prob^{\sigma^{\!\star}}_{\calM^{\sup},s_0}
      (\mathsf{div}) \ge \varepsilon \enspace.
    \]
    This is a contradiction with the initial hypothesis. We conclude
    that
    \[
      \limsup_{n \to \infty} \Prob^{\sup}_{\calM_n^{\sup},s_0} (\F
      s_\bot^n) =0
    \]
    hence
    \[
      \lim_{n \to \infty} \Prob^{\sup}_{\calM_n^{\sup},s_0} (\F
      s_\bot^n) =0 \enspace.
    \]

    It remains to show that for every
    $\sigma \in \SchedPurePos(\calM^{\sup})$,
    $\Prob^{\sigma}_{\calM^{\sup},s_0} (\mathsf{div}) =0$.  Fix
    a scheduler $\sigma \in \SchedPurePos(\calM^{\sup})$.
    The set of infinite paths in $\calM^{\sup}$ from $s_0$ can be
    decomposed into the three events
    \begin{itemize}
    \item $A \eqdef \F \Goal \vee \F \bad^{\sup}$; 
    \item
      $B_\sigma \eqdef \F \Avoid_{\calM^{\sup}}^\sigma(\Goal)
      \setminus A = \F \Avoid_{\calM^{\sup}}^\sigma(\Goal) \setminus
      \F \bad^{\sup}$;
    \item
      $C_\sigma \eqdef \neg\left(A \vee B_\sigma \right) = \G \left(
        \neg \Goal \wedge \neg \bad^{\sup} \wedge \neg
        \Avoid_{\calM^{\sup}}^\sigma(\Goal)\right)$.
    \end{itemize}
    Applying the law of total probability for this decomposition, we
    get:
    \[
      \Prob^{\sigma}_{\calM^{\sup},s_0}\left(\mathsf{div}\right) =
      \Prob^{\sigma}_{\calM^{\sup},s_0}\left(\mathsf{div} \cap
        A\right) + \Prob^{\sigma}_{\calM^{\sup},s_0}\left(\mathsf{div}
        \cap B_\sigma\right) +
      \Prob^{\sigma}_{\calM^{\sup},s_0}\left(\mathsf{div} \cap
        C_\sigma\right)\enspace.
    \]
    Several remarks can be made:
    \begin{enumerate}
    \item $\mathsf{div}\cap A = \emptyset$, hence the
      first term is always $0$;
      \item the second term is equal to $0$
        when the MDP is non-fleeing;
      \item the third term is $0$ when the MDP is univ-decisive, since
        \[C_\sigma \subseteq \G \left(\neg \Goal \wedge \neg
          \Avoid_{\calM^{\sup}}^\sigma(\Goal)\right)\enspace.\]
    \end{enumerate}

    We conclude that, under the hypotheses of the statement, it is
    always the case that $\Prob^{\sigma}_{\calM^{\sup},s_0}\left(\mathsf{div}\right)
      = 0$.
    \qed
  \end{proof}

  \section{Missing proofs of
    Section~\ref{subsec:nplcs}} \label{app:nplcs} In this section, we
  provide the missing proofs for Section~\ref{subsec:nplcs}, which
  deals with NPLCSs.

  \decNPLCS*

  \begin{proof}
Let us prove each case separately.
\begin{align*}
  \Prob^{\inf}(\F \Goal) = 1 \quad & \iff \quad  \Prob^{\sup}(\G \neg \Goal) = 0 \\
                                   & \iff \quad \neg \bigl(\exists
                                     \sigma:\ \Prob^\sigma(\G \neg \Goal) >0 \bigr)\\
                                     & \iff \quad \neg \bigl(\exists
                                     \sigma:\ \Prob^\sigma(\F  \Goal) <1 \bigr)
\end{align*}

\begin{align*}
  \Prob^{\sup}(\F \Goal) = 0 \quad  & \iff \quad \neg \bigl(\exists
                                      \sigma:\ \Prob^\sigma(\F \Goal) >0 \bigr)
\end{align*}

\begin{align*}
  \Prob^{\inf}(\F \Goal) = 0 \quad & \iff \quad  \Prob^{\sup}(\G \neg \Goal) = 1 \\
                                   & \iff \quad \exists \sigma:\ \Prob^\sigma(\G \neg \Goal) =1 \\
                                & \iff \quad \exists \sigma:\ \Prob^\sigma(\F  \Goal) =0
\end{align*}

In each case, the last statement is decidable as soon as $\Goal$ is a
\emph{regular} set of configurations~\cite{Bertrand-PhD06}, and in
particular if $\Goal \subseteq Q$ is a subset of control states (with
no constraints on channel contents)~\cite{BBS-acmtocl07}.
\end{proof}

  We show that the \emph{value-$1$ problem} is undecidable
  (Theorem~\ref{theo:undec}).

\undec*
\begin{proof}
  We present a reduction from the \emph{boundedness} problem for lossy
  channel systems (LCSs). The latter is know to be undecidable, even
  with a single channel~\cite{Mayr03}. Since the core of the paper
  only introduces channel systems and their non-deterministic and
  probabilistic variant, we define here lossy channel systems.

    A \emph{lossy channel system} is syntactically identical to a channel
system, i.e., is a tuple $\calS = (Q,\calC,\Mess,\ActLCS,\Delta)$. The
semantics differs in the possibility to lose messages arbitrarily when
performing a step. A step in a lossy channel system is composed of the
perfect application of a rule (read, write, or internal action, as in a
channel system) followed by message losses. The latter is formalised
using the subword ordering on channel contents. For $v \in \Mess^*$,
let ${\downarrow v}$ denote the set of subwords of $v$. Then, firing
$\delta$ from $(q,\cc)$ non-deterministically yields any configuration
$\delta(q,\cc) = (q',\cc')$ where, if $\op = c?m$ then
$\cc'(c) \in {\downarrow v}$ and for every $c' \neq c$,
$\cc'(c') \in {\downarrow \cc(c')}$, if $\op = c!m$ then
$\cc'(c) \in {\downarrow( \cc(c) m)}$ and for every $c'\neq c$,
$\cc'(c') \in {\downarrow \cc(c')}$, and if $\op=\ell \in \ActLCS$ then
for every $c \in \calC$, $\cc'(c) \in {\downarrow \cc(c)}$.

    A lossy channel systems is \emph{bounded} if its set of
    reachable configurations from a fixed initial configuration is
    finite.  Given an LCS $\calL$ with a single channel over message
    alphabet~$\Sigma$, consider the NPLCS $\calN$ represented in
    Figure~\ref{fig:undec-NPLCS}, with $\lambda \in (0,1)$ an arbitrary
    loss rate. In that reduction, $\ell_{\mathsf{restart}}$,
    $\ell_{\mathsf{try}}$, and $\ell_{\mathsf{lose}}$ all label internal
    actions with no effect on the channel contents, and $? \Sigma$ is a
    shortcut for multiple transitions $? a$ for every $a \in \Sigma$.
  
    \begin{figure}[htbp]
      \centering
        \begin{tikzpicture}[xscale=1]
      \tikzstyle{j0}=[draw,text centered,rounded corners=2pt]
  
      \path (.5,0) node[j0] (q0) {$q_0$};
      \path (2.5,0) node[j0] (q) {$q$};
      \path (5,0) node[j0] (qtry) {$q_{\mathsf{try}}$};
      \path (7.5,.5) node[j0] (goal) {$\Goal$};
      \path (7.5,-.5) node[j0] (sink) {$\bad$};
  
      \path (1.5,-1.5) node[j0] (cg) {$\begin{array}{c}\mathtt{Cleaning}\\\mathtt{Gadget}\end{array}$};
  
      \path (-.2,.8) node (tag) {$\calL$};
      \path (-.85,0) node (tag) {$\calN:$};
  
      \draw [latex'-] (q0) -- +(-.5,0);
      
      \path (cg) edge[-latex',bend left]  (q0);

      \path (q) edge[-latex',bend left] node[pos=.85,right]
      {$\ell_{\mathsf{restart}}$} (cg);

      \path (q) edge[-latex'] node[pos=.7,above]
      {$\ell_{\mathsf{try}}$} (qtry);

      \path (qtry) edge[-latex',bend left] node[pos=.5,above]
      {$? \Sigma$} (goal);

      \path (qtry) edge[-latex',bend right] node[pos=.5,below]
      {$\ell_{\mathsf{lose}}$} (sink);
    
    \draw[densely dashed] (-.5,1) -- (3.5,1) -- (3.5,-.7) -- (-.5,-.7) --
    cycle;
    \end{tikzpicture}
    \caption{Reduction to prove the undecidability of the value-$1$
      problem for NPLCSs.}
    \label{fig:undec-NPLCS}
  \end{figure}
  
  Compared to $\calL$, $\calN$ enables the possibility from any
  configuration to take a try and reach one of the sink states. From
  state $q_{\textsf{try}}$, the target can be reached if the channel
  contents were not emptied when moving to
  $q_{\textsf{try}}$. Additionally, from any state $q$, the scheduler
  may decide to restart a computation and move back to the initial
  state $q_0$ with empty channel contents. This is achieved by a
  cleaning gadget --- detailed in~\cite{BBS-acmtocl07} --- which
  ensures that (1) under any scheduler, when entering $q_0$, the
  channel is empty, and (2) some scheduler almost-surely ensures to
  exit the cleaning gadget (see~Lemma~5.2 in \cite{BBS-acmtocl07}).
  
  Write $\calM$ for the MDP induced by $\calN$ with a fixed loss rate
  $\lambda$, and let $s_0 = (q_0,\varepsilon)$ be the initial
  configuration. We claim that this construction enjoys the following
  properties:
  \begin{itemize}
  \item if $\calL$ is bounded, there exists $p>0$ such that
    $\Prob^{\sup}_{\calM,s_0} (\F \Goal) \leq 1{-}p <1$;
  \item if $\calL$ is unbounded,
    $\Prob_{\calM,s_0}^{\sup} (\F \Goal) =1$.
  \end{itemize}
  
    Assume first that $\calL$ is bounded. In order to reach $\Goal$, a
    scheduler must move to state $q_{\mathsf{try}}$ with internal action
    $\ell_{\mathsf{try}}$ at some point. If all messages from $\calL$
    are lost in that step, the only possibility is to move to
    $\bad$. Otherwise, the scheduler chooses to move to $\Goal$.
  
    Let $p$ be the minimum probability to lose all messages in $\Sigma$
    when taking action $\ell_{\mathsf{try}}$ from some reachable
    configuration in $\calL$. Since $\calL$ is bounded, $p$ is positive
    (and equal to $\lambda^{-\ell}$ where $\ell$ is the maximal channel
    contents length in~$\calL$). Whenever the scheduler chooses to try
    and reach $\Goal$, the probability is at least~$p$ to move to
    $\bad$. All in all, under any scheduler $\sigma$,
    $\Prob^\sigma_{\calM}(\F \Goal) \leq 1{-}p$, and thus
    $\Prob^{\sup}_{\calM}(\F \Goal) \leq 1{-}p$. Note that $p$ only
    depends on the LCS $\calL$ and on the loss rate~$\lambda$.
  
    Assume now that $\calL$ is unbounded, and fix $\eta >0$. Consider a
    reachable configuration $(q,w)$, large enough to have
    $\lambda^{-|w|} \leq \eta$. Thus from configuration $(q,w)$, when
    $\ell_{\mathsf{try}}$ is played, the probability to move to $\bad$
    is at most $\eta$. We define a scheduler $\sigma$ as follows. The
    objective of $\sigma$ is to reach configuration $(q,w)$ and then
    play action $\ell_{\mathsf{try}}$ to move to
    $q_{\mathsf{try}}$. Since $(q,w)$ is reachable, there is a sequence
    of actions that allows to reach it with positive probability from
    the initial configuration~$(q_0,\varepsilon)$. If this path is not
    realised (due to unwanted message losses), $\sigma$ changes mode and
    restarts the simulation by moving to the cleaning gadget. So
    defined, $\sigma$ eventually succeeds in reaching $(q,w)$ almost
    surely, and thus ensures
    $\Prob^\sigma_{\calM}(\F \Goal) = 1{-}\lambda^{-{|w|}} \geq 1
    -\eta$. This being true for every precision~$\eta$, we obtain that
    $\Prob^{\sup}_{\calM,s_0}(\F \Goal) =1$.
    \qed
\end{proof}

\section{Missing proofs of Section~\ref{subsec:pomdps}} \label{app:pomdps}

We give the missing proofs of Lemmas~\ref{lem:pomdpInfDecisive} and~\ref{lem:pomdpEpsilonClose} from Section~\ref{subsec:pomdps}.
We first recall a lemma about reachability properties in POMDPs: if a state cannot be avoided almost surely, then the probability to reach it is bounded away from $0$ for all schedulers.
For a POMDP $\pomdp$, we define $\leastProb_\pomdp\in\bbR$ as the minimal positive probability appearing in the syntactic description of $\pomdp$ (i.e., $\leastProb_\pomdp = \min\{\transitions(\state, \act, \sig, \state') \mid \transitions(\state, \act, \sig, \state') > 0\}$).

\begin{lemma}[Corollary of~{\cite[Lemma~4]{BFGHPV24}}] \label{lem:lowerBoundPOMDP}
    Let $\pomdp = \pomdpFull$ be a POMDP, with $\initState\in\states$ an initial state and $\Goal\in\states$ a target state.
    If for all schedulers $\sigma$, $\Prob_{\mdp[\pomdp], \initState}^{\sigma}(\F \Goal) > 0$, then for all schedulers $\sigma$, $\Prob_{\mdp[\pomdp], \initState}^{\sigma}(\F[\le 2^{|\states|} - 1] \Goal) \ge \leastProb_\pomdp^{2^{|\states|} - 1}$.
\end{lemma}

\pomdpInfDecisive*
\begin{proof}
  Let $\sigma\in\SchedPurePos(\mdp[\pomdp]^\epsilon)$ be a pure and positional scheduler.
  We show that $\Prob^\sigma_{\mdp[\pomdp]^\epsilon,\state_0}\big(\F \Goal \vee \F
  \Avoid_{\mdp[\pomdp]^\epsilon}^{\inf}(\Goal) \big) = 1$.
  
  It suffices to show that there is a positive probability $\beta > 0$ such that any belief $\belief$ that is not $(1-\epsilon)$-avoiding has probability $\ge \beta$ to reach~$\Goal$ in $\mdp[\pomdp]$, for all schedulers.
  Indeed, this implies that, in $\mdp[\pomdp]^\epsilon$, a path either ends up in~$\Bad^\epsilon\in\Avoid_{\mdp[\pomdp]^\epsilon}^{\inf}(\Goal)$, or only visits beliefs that are not $(1-\epsilon)$-avoiding, and thus reaches~$\Goal$ with probability~$1$.

  We show the existence of such a $\beta$.
  Let $\belief\in\beliefs$ be a belief that is not $(1-\epsilon)$-avoiding.
  Consider $\beliefSupp' = \{\state\in\supp(\belief) \mid \belief(\state) \ge \frac{\epsilon}{|\states|} \}$.
  We have $\sum_{\state\in\beliefSupp'} \belief(\state) \ge 1 - \epsilon$.
  Since $\belief$ is not $(1-\epsilon)$-avoiding, we have $\beliefSupp' \notin \ASbeliefSupps$.
  Consider the normalisation~$\belief'$ of $\belief$ to the elements in $\beliefSupp'$.
  Consider a modified POMDP $\pomdp^{\belief'}$ with an extra initial state $\initState^{\belief'}$ such that after one step, no matter the action, the belief is~$\belief'$.
  Since $\beliefSupp' \notin \ASbeliefSupps$, for all schedulers $\sigma$, $\Prob_{\mdp[\pomdp^{\belief'}], \initState^{\belief'}}^{\sigma}(\F \Goal) = \Prob_{\mdp[\pomdp], \belief'}^{\sigma}(\F \Goal) > 0$.
  By Lemma~\ref{lem:lowerBoundPOMDP}, for all schedulers $\sigma$, $\Prob_{\mdp[\pomdp^{\belief'}], \initState^{\belief'}}^{\sigma}(\F[\le 2^{|\states| + 1}] \Goal) \ge \leastProb_{\pomdp^{\belief'}}^{2^{|\states| + 1}}$.
  Observe that $\leastProb_{\pomdp^{\belief'}} \ge \min\{\leastProb_\pomdp, \frac{\epsilon}{|\states|}\}$ since $\pomdp^{\belief'}$ is similar to $\pomdp$ with an extra transition based on the probabilities of~$\belief'$, which are all greater than or equal to $\frac{\epsilon}{|\states|}$.

  We conclude that we have the desired property by choosing
  \[
      \beta = \min\left\{\leastProb_{\pomdp}, \frac{\epsilon}{|\states|}\right\}^{2^{|\states| + 1}} > 0\enspace.
  \]
  \qed
\end{proof}

\pomdpEpsilonClose*
\begin{proof}
  We first show that $\Prob_{\mdp[\pomdp]^\epsilon, \initState}^{\inf}(\F \Goal) \le \Prob_{\mdp[\pomdp], \initState}^{\inf}(\F \Goal)$.
  For every scheduler $\sigma\in\Sched(\mdp[\pomdp])$, we can define a scheduler $\sigma^\epsilon\in\Sched(\mdp[\pomdp]^\epsilon)$ that chooses the same actions as $\sigma$ as long as $\Bad^\epsilon$ is not reached and, if $\Bad^\epsilon$ is reached, chooses any action.
  We have by construction that $\Prob_{\mdp[\pomdp]^\epsilon, \initState}^{\sigma^\epsilon}(\F \Goal) \le \Prob_{\mdp[\pomdp], \initState}^{\sigma}(\F \Goal)$.

  We now show that \[\Prob_{\mdp[\pomdp], \initState}^{\inf}(\F \Goal) \le \Prob_{\mdp[\pomdp]^\epsilon, \initState}^{\inf}(\F \Goal) + \epsilon\enspace.\]
  Let $\sigma^\epsilon\in\SchedPurePos(\mdp[\pomdp]^\epsilon)$ be such that $\Prob_{\mdp[\pomdp]^\epsilon, \initState}^{\sigma^\epsilon}(\F \Goal) = \Prob_{\mdp[\pomdp]^\epsilon, \initState}^{\inf}(\F \Goal)$ (such a scheduler exists by Lemma~\ref{lemma:schedulerComplexity}, item~\ref{item:infOptPurePos} as $\mdp[\pomdp]^\epsilon$ is finitely action-branching).
  Let $\sigma\in\Sched(\mdp[\pomdp])$ be a scheduler that plays the same actions as $\sigma^\epsilon$ as long as only beliefs that are not $(1-\epsilon)$-avoiding are visited, and follows a scheduler that avoids $\Goal$ with probability $\ge (1 - \epsilon)$ as soon as a $(1-\epsilon)$-avoiding belief is visited for the first time.
  Let $B = \{\belief\in\beliefs \mid \belief\ \text{is $(1-\epsilon)$-avoiding}\}$.

  We have that
  \begin{align*}
      \Prob_{\mdp[\pomdp]^\epsilon, \initState}^{\sigma^\epsilon}(\F \Goal)
      &= 1 - \Prob_{\mdp[\pomdp]^\epsilon, \initState}^{\sigma^\epsilon}(\G (\lnot \Goal \land \lnot \Bad^\epsilon)) - \Prob_{\mdp[\pomdp]^\epsilon, \initState}^{\sigma^\epsilon}(\F \Bad^\epsilon)\\
      &= 1 - \Prob_{\mdp[\pomdp]^\epsilon, \initState}^{\sigma^\epsilon}(\F \Bad^\epsilon)\\
      &= 1 - \Prob_{\mdp[\pomdp], \initState}^{\sigma}(\F B)\enspace,
  \end{align*}
  where the second line holds because $\mdp[\pomdp]^\epsilon$ is $\inf$-decisive (Lemma~\ref{lem:pomdpInfDecisive}), and the third line holds by construction of $\mdp[\pomdp]^\epsilon$ and $\sigma$.

  Observe that any time $B$ is reached by $\sigma$, the probability to avoid $\Goal$ is at least $1 - \epsilon$.
  Hence,
  \begin{align*}
  \Prob_{\mdp[\pomdp], \initState}^{\sigma}(\G \lnot \Goal)
  \ge \Prob_{\mdp[\pomdp], \initState}^{\sigma}(\F B)\cdot (1 - \epsilon)
  \ge \Prob_{\mdp[\pomdp], \initState}^{\sigma}(\F B) - \epsilon
  \enspace.
  \end{align*}
  We deduce that $\Prob_{\mdp[\pomdp], \initState}^{\sigma}(\F B) \le 1 - \Prob_{\mdp[\pomdp], \initState}^{\sigma}(\F \Goal) + \epsilon$.
  Plugging it back in the above equality, we find
  \[
      \Prob_{\mdp[\pomdp]^\epsilon, \initState}^{\sigma^\epsilon}(\F \Goal) \ge \Prob_{\mdp[\pomdp], \initState}^{\sigma}(\F \Goal) - \epsilon\enspace.
  \]
  Hence, $\Prob_{\mdp[\pomdp], \initState}^{\inf}(\F \Goal) \le \Prob_{\mdp[\pomdp]^\epsilon, \initState}^{\inf}(\F \Goal) + \epsilon$.
  \qed
\end{proof}

\end{document}